\newcommand{\textsubscript}[1]{\ensuremath{_{\textrm{#1}}}}
\newcommand*{\sourceatright}[2][2em]{{
  \unskip\nobreak\hfil\penalty50
  \hskip#1\hbox{}\nobreak\hfil{\em #2}
  \parfillskip\z@\finalhyphendemerits=0\par}
}
\let\imp=\rightarrow
\newcommand{\CC}{CC\xspace}
\newcommand{\CCr}{CC\textsubscript{r}\xspace}
\DeclareMathOperator{\VL}{\mathcal{V}}                   
\DeclareMathOperator{\Type}{Type}               
\DeclareMathOperator{\Prop}{Prop}               
\newcommand{\drule}[3][ ]{
\begin{prooftree}
#2
\using{\text{#1}}
\justifies
#3
\end{prooftree}
}
\newcommand{\linespace}{\hspace*{3em}}
\newcommand{\dsub}[2]{[#1\leftarrow #2\,]}        
\newcommand{\dlineone}[1]{$$ #1 $$}             
\newcommand{\dlinetwo}[2]{$$ #1 \linespace #2 $$} 
\newcommand{\beq}{=_\beta}
\newcommand{\bred}{\rightsquigarrow_{\beta}}
\newcommand{\breds}{\stackrel{*}{\rightsquigarrow}_{\beta}}
\newcommand{\dseq}[2][\Gamma]{#1 \vdash #2}     
\newcommand{\wf}[1]{#1\ensuremath{\text{\em\xspace wf}}}
\newcommand{\dseqr}[2][\Gamma]{#1 \vdash_{\hspace{-.3em}r} #2}  
\newcommand{\wfr}[1]{#1\ensuremath{\text{\em\xspace wf\textsubscript{r}}}}
\newcommand{\dseqp}[2][\Gamma]{#1 \vdash_{\hspace{-.3em}p} #2}  
\newcommand{\wfp}[1]{#1\ensuremath{\text{\em\xspace wf\textsubscript{p}}}}
\newcommand{\norm}[1]{\|#1\|}
\newcommand{\cpl}[2]{\langle #1,#2\rangle}
\begin{document}

\title{Investigations on a Pedagogical Calculus of Constructions}
\author{
   {\bfseries Lo\"ic Colson}\\
   (LITA, University Paul-Verlaine -- Metz, France\\
   colson@univ-metz.fr)
   \and
   {\bfseries Vincent Demange}\\
   (LITA, University Paul-Verlaine -- Metz, France\\
   demange@univ-metz.fr)
}
\maketitle
\begin{abstract}
\boldmath
In the last few years appeared \textit{pedagogical propositional natural deduction systems}. In these systems, one must satisfy the \textit{pedagogical constraint}: the user must give an {\em example} of any introduced notion. In formal terms, for instance in the propositional case, the main modification is that we replace the usual rule (hyp) by the rule (p-hyp)
$$
\begin{array}{c@{\hspace{3em}}c}
\begin{prooftree}
F \in \Gamma
\using{\text{(hyp)}}
\justifies
\Gamma \vdash F
\end{prooftree}

&

\begin{prooftree}
F \in \Gamma \quad \vdash \sigma \cdot \Gamma
\using{\text{(p-hyp)}}
\justifies
\Gamma \vdash F
\end{prooftree}
\end{array}
$$
where $\sigma$ denotes a substitution which replaces variables of $\Gamma$ with an example. This substitution $\sigma$ is called the {\em motivation} of $\Gamma$.

First we expose the reasons of such a constraint and properties of these ``pedagogical'' calculi: the absence of negation at logical side, and the ``usefulness'' feature of terms at computational side (through the Curry-Howard correspondence). Then we construct a simple pedagogical restriction of the calculus of constructions (\CC) called \CCr. We establish logical limitations of this system, and compare its computational expressiveness to G\"odel system T.

Finally, guided by the logical limitations of \CCr, we propose a formal and general definition of what a pedagogical calculus of constructions should be.
\end{abstract}

\begin{keywords}
mathematical logic, negationless mathematics, constructive mathematics, typed lambda-calculus, calculus of constructions, pedagogical system.
\end{keywords}

\begin{category}
F.1.1, F.4.1
\end{category}

\section{Introduction and Motivations}\label{sec_intro}

\subsection{The pedagogical constraint}

Recently the articles \cite{LCDM07,LCDM08,LCDM09} appeared in print, introducing {\em pedagogical natural deduction systems} and {\em pedagogical typed $\lambda$-calculi}. The main feature about these systems is that any proof (or any program) must satisfy the so named {\em pedagogical constraint}: in natural deduction systems (for instance) the rule (hyp) is replaced by (p-hyp)

$$
\begin{array}{c@{\hspace{3em}}c}
\begin{prooftree}
F \in \Gamma
\using{\text{(hyp)}}
\justifies
\Gamma \vdash F
\end{prooftree}

&

\begin{prooftree}
F \in \Gamma \quad \vdash \sigma \cdot \Gamma
\using{\text{(p-hyp)}}
\justifies
\Gamma \vdash F
\end{prooftree}
\end{array}
$$

where $\sigma$ denotes a substitution which replaces propositional variables of $\Gamma$ with an example, and $\vdash \sigma \cdot \Gamma$ stands for the derivations of those substituted formulas.

The idea of such a constraint is that, in order to assume a set $\Gamma$ of hypotheses, one must first provide a ``motivation'' (the substitution $\sigma$ under consideration) in which the set of hypotheses is fulfilled. In doing so, we can always exemplify introduced hypotheses. This is the formal counterpart of the usual informal teaching practice, consisting in giving examples of objects satisfying the assumed properties. This last point is a justification of the terminology {\em pedagogical systems}, and the necessity of such a constraint was already observed by \cite{HP13} [see Section~\ref{sec_poincare}].

\subsection{The pedagogical minimal propositional calculus}

In~\cite{LCDM07}, the minimal propositional calculus over $\imp$, $\vee$ and $\wedge$ has been constrained as previously explained. It is shown in the article that the resulting calculus (P-MPC) is equivalent to the original one: a judgment $\Gamma \vdash F$ is derivable in the usual system (MPC) if and only if it is derivable in its pedagogical version (P-MPC).

\subsection{The pedagogical second-order propositional calculi}
The case of the second-order propositional calculus (Prop$^2$) is considered in~\cite{LCDM08}. Constraining only the rule of hypothesis as above, one is led to a {\em weakly pedagogical second-order calculus} (P$_s$-Prop$^2$), where rules dealing with quantification are the usual ones:
$$
\begin{array}{c@{\hspace{3em}}c}
\begin{prooftree}
\Gamma \vdash F \quad \alpha \not\in\VL(F)
\using{(\forall_i)}
\justifies
\Gamma \vdash \forall \alpha.F
\end{prooftree}

&

\begin{prooftree}
\Gamma \vdash \forall \alpha.F
\using{(\forall_e)}
\justifies
\Gamma \vdash F[\alpha \leftarrow U]
\end{prooftree}
\end{array}
$$

The same remark as above holds for this calculus, but it is {\em not stable by normalization} of proofs. Indeed, it is shown that $\bot \imp \bot$ is derivable in P$_s$-Prop$^2$ (where $\bot$ stands for $\forall \alpha. \alpha$):
$$
\begin{array}{l@{\,}l@{~}l}
1.& \beta \vdash \beta                       & (\text{$\beta$ is motivable}) \\
2.& \vdash \beta \imp \beta                  & (\imp_i\ 1) \\
3.& \vdash \forall \beta. \beta \imp \beta   & (\forall_i\ 2) \\
4.& \vdash \bot \imp \bot                    & (\forall_e\ 3)
\end{array}
$$

But a normal form of this proof must end with a ($\imp_i$) rule of $\bot$, which is impossible since $\bot$ is not motivable. Hence the normal form of this proof is not a proof of P$_s$-Prop$^2$.

This motivates the more constrained system P-Prop$^2$ where the ($\forall_e$) rule has been replaced by
$$
\begin{prooftree}
\Gamma \vdash \forall \alpha.F  \quad  \vdash \sigma \cdot U
\using{(\text{P-}\forall_e)}
\justifies
\Gamma \vdash F[\alpha \leftarrow U]
\end{prooftree}
$$

It is shown about this system that the usual second-order encoding of connectives $\vee$ and $\wedge$ essentially works but it must be observed that the $\vee_i$ (at right for instance) becomes:
$$
\begin{prooftree}
\Gamma \vdash A  \quad  \vdash \sigma \cdot B
\using{(\vee_{ir})}
\justifies
\Gamma \vdash A \vee B
\end{prooftree}
$$

The main result concerning P-Prop$^2$ is that there exists a translation $F \mapsto F^\gamma$ inspired by the A-translation of \cite{HF78} such that: $\Gamma \vdash F$ is derivable in Prop$^2$ if and only if $\Gamma^\gamma \vdash F^\gamma$ is derivable in P-Prop$^2$.

\subsection[The pedagogical second-order lambda-calculus]{The pedagogical second-order $\lambda$-calculus}

Through the Curry-Howard isomorphism, previous work about second-order propositional calculus is extended in~\cite{LCDM09} to the second-order $\lambda$-calculus. The system is shown to be stable by reduction (i.e. enjoys the so-called subject reduction property). An important feature for a $\lambda$-calculus is defined: the {\em usefulness} of functions. It means that every typable function in this pedagogical $\lambda$-calculus can be applied to a term: if $\vdash f: A \imp B$, then there is a substitution $\sigma$ such that $\sigma \cdot A$ is inhabited.
Indeed, pedagogical $\lambda$-calculi do not allow one to write useless programs, which are not needed.

\subsection{The calculus of constructions}

The calculus of constructions (\CC) has been first introduced in~\cite{CH84,TC85}: it is a $\lambda$-calculus which encompasses higher-order $\lambda$-calculi and calculi with dependent types. It is then natural to extend previous works on ``pedagogization'' to \CC in the aim of obtaining a uniform treatment of pedagogical $\lambda$-calculi.

\subsection{Organization of the article}
The paper is organized as follows: in section~\ref{sec_background} we recall usual notations for the calculus of constructions (\CC); in section~\ref{sec_pedagogizing} we introduce the main criterion for a subsystem of \CC to be pedagogical, we discuss about the impossibility of a straightforward modification of \CC, and we propose a better one; then in section~\ref{sec_ccr_pedag} we show that this restriction meets this criterion; we present some limitations of it at logical and computational side in sections~\ref{sec_limitations} and~\ref{sec_expressivness}; finally we conclude by the first formal definition of a pedagogical subsystem of \CC.
\section{Background and Notations}\label{sec_background}

In this section, we briefly recall usual definitions and notations about the calculus of constructions \CC.

We try to use $x,y,..$ as symbols for variables, $u,v,w,t,..$ to denote terms, and $A,B,..$ for types and formulas.

$\equiv$ is the syntactical equality of terms\,\footnote{As in \cite{TC89}, we assume De Bruijn indexes for bound variables and identifiers for free variables. So there is no need for $\alpha$-conversion notion.}. We note by $\bred$ the usual beta-reduction relation between terms; $\breds$ its reflexive and transitive closure; and $\beq$ its equivalence closure. $\VL(t)$ is the set of free variables of $t$. $t$ is said to be closed if $\VL(t) = \emptyset$. $t\dsub{x}{u}$ is the usual substitution of $u$ for $x$ in $t$; and $t\dsub{x_1,..,x_n}{u_1,..,u_n}$ is the simultaneous substitution of $u_1$ for $x_1$, $u_2$ for $x_2$, etc. To shorten notations, we use a vector symbolism: $\vec{t}$ denotes the sequence of terms $t_1,..,t_n$; and $\forall \vec{x}^{\vec{A}}.B$ denotes $\forall x_1^{A_1}..\forall x_n^{A_n}.B$.

There are two kinds of judgments: $\wf{\Gamma}$ means that the environment $\Gamma$ is syntactically well-formed, and $\dseq{t:A}$ expresses that the term $t$ is of type $A$ in the environment $\Gamma$. Implicitly $\dseq{A:\kappa}$ signifies that there exists $\kappa \in \{\Prop,\Type\}$ such that this previous statement holds. $\dseq{t:A:\kappa}$ is the contraction of $\dseq{t:A}$ and $\dseq{A:\kappa}$. As usual, $A \imp B$ is a shortcut notation for $\forall x^A.B$ when $x$ does not appear in $B$.

Rules of \CC are presented in [Fig.~\ref{fig_cc_rules}]: a close presentation can be found in~\cite{MBJPS04} (without the well-formed judgment), or in~\cite{TC86,HB92}.

\begin{figure*}
\fbox{\begin{minipage}{\textwidth-.95em}

\begin{tabular}{r@{\hspace*{2em}}l}

\drule[(env\textsubscript{1})]{}{\wf{[\,]}}
&
\drule[(env\textsubscript{2})]{
\dseq{A:\kappa} \quad x \not\in \VL(\Gamma)
}{
\wf{\Gamma, x:A}
}
\\[3em]

\drule[(ax)]{\wf{\Gamma}}{\dseq{\Prop:\Type}}
&
\drule[(var)]{
\wf{\Gamma,x:A,\Gamma'}
}{
\dseq[\Gamma,x:A,\Gamma']{x:A}
}
\\[3em]

\drule[(abs)]{
\dseq[\Gamma,x:A]{u:B:\kappa}
}{
\dseq{\lambda x^A.u : \forall x^A.B}
}
&
\drule[(prod)]{
\dseq[\Gamma,x:A]{B:\kappa}
}{
\dseq{\forall x^A.B:\kappa}
}
\\[3em]

\drule[(app)]{
\dseq{u:\forall x^A.B} \quad \dseq{v:A}
}{
\dseq{u\ v:B\dsub{x}{v}}
}
&
\drule[(conv)]{
\dseq{t:A} \quad \dseq{A':\kappa} \quad A \beq A'
}{
\dseq{t:A'}
}
\end{tabular}

\vspace*{1.5em}

\centering where $\kappa$ stands for $\Prop$ or for $\Type$.
\end{minipage}}
\caption{Inference rules of \CC}
\label{fig_cc_rules}
\end{figure*}

\bigskip

Beta-reduction is known to be confluent and terms of this calculus to be strongly normalizing \cite{HB92}.

\noindent In the sequel we shall need the following elementary results (proofs in \cite{TC85,HB92}):
\begin{lemma}\label{lem_type_nowhere}
If $\wf{\Gamma}$ holds, then $\Type \not\in \Gamma$ (the constant $\Type$ never appears in any well-formed environment). And if $\dseq{t:A}$ holds, then $\Type \not\in \Gamma \cup \{t\}$.
\end{lemma}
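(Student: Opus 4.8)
The plan is to prove both assertions by a single simultaneous induction on the derivation of the two judgments $\wf{\Gamma}$ and $\dseq{t:A}$, since the inference rules of \CC intertwine these two forms of judgment (see Fig.~\ref{fig_cc_rules}). Concretely, I would strengthen the statement into the conjunction ``for every derivable judgment, if it is of the form $\wf{\Gamma}$ then $\Type$ does not occur in $\Gamma$, and if it is of the form $\dseq{t:A}$ then $\Type$ occurs neither in $\Gamma$ nor in $t$,'' and then examine each of the eight rules in turn, assuming the property holds for the premises.

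For the environment rules this is straightforward: (env\textsubscript{1}) has the empty environment, which trivially contains no $\Type$; for (env\textsubscript{2}), the induction hypothesis applied to the premise $\dseq{A:\kappa}$ gives $\Type \notin \Gamma \cup \{A\}$, hence $\Type \notin \Gamma, x{:}A$. For (ax), the environment $\Gamma$ is clean by the induction hypothesis on $\wf{\Gamma}$, and the subject is $\Prop$, which is not $\Type$. For (var), the environment $\Gamma, x{:}A, \Gamma'$ is clean by induction hypothesis, and the subject $x$ is a variable, so $\Type \notin \{x\}$. The rules (abs), (prod), (app) are handled by noting that their subjects are built ($\lambda x^A.u$, $\forall x^A.B$, $u\,v$) from pieces that appear as subjects or types in the premises, and that the environments in the premises (such as $\Gamma, x{:}A$) already contain $\Gamma$; the key point for (abs) and (prod) is that the induction hypothesis on the premise $\dseq[\Gamma,x:A]{\cdots}$ yields $\Type \notin \Gamma, x{:}A$, hence $\Type \notin A$, and combined with $\Type \notin u$ (resp. $\Type \notin B$) one gets $\Type \notin \lambda x^A.u$ (resp. $\Type \notin \forall x^A.B$). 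For (app), $\Type \notin u$ and $\Type \notin v$ give $\Type \notin u\,v$, and $\Type \notin \Gamma$ is immediate from either premise.

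The only genuinely delicate case is (conv), where the new type $A'$ in the conclusion does not literally appear in the premises as a subterm of a subject — it appears only as the subject of the side premise $\dseq{A':\kappa}$. Here I would simply invoke the induction hypothesis on that very premise $\dseq{A':\kappa}$: it gives $\Type \notin \Gamma \cup \{A'\}$, which is exactly what is needed for the conclusion $\dseq{t:A'}$ (the constraint $\Type \notin \{t\}$ being inherited from the first premise $\dseq{t:A}$). So in fact no reasoning about $\beq$ or about the shape of $A$ versus $A'$ is required — the side condition $\dseq{A':\kappa}$ carries precisely the information we want. I expect this (conv) case to be the one a careless reader would stumble on, precisely because one is tempted to reason about $A \beq A'$ rather than to read off the result directly from the side premise; once that is noticed, the whole argument is a routine rule-by-rule check. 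Finally, note that the first assertion of the lemma ($\Type \notin \Gamma$ whenever $\wf{\Gamma}$) is subsumed by the strengthened statement and is also immediately recovered from the second one in the cases where $\Gamma$ is non-empty, by inspecting the (env\textsubscript{2}) rule that introduced the last binding.
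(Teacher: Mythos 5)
Your simultaneous induction is correct, but note that the paper does not actually prove this lemma: it is stated as a known elementary result with the proof deferred to the cited literature (\cite{TC85,HB92}), and your argument is essentially the standard one found there. The one place where your exposition wobbles is the (conv) case: you present it as the delicate step and resolve it by extracting $\Type \notin \Gamma \cup \{A'\}$ from the side premise $\dseq{A':\kappa}$, but with the strengthening you yourself chose (``$\Type$ occurs neither in $\Gamma$ nor in the \emph{subject} $t$'' --- deliberately saying nothing about the type of the judgment, which can legitimately be $\Type$ as in (ax)), the conclusion $\dseq{t:A'}$ needs only $\Type \notin \Gamma \cup \{t\}$, which is inherited verbatim from the first premise $\dseq{t:A}$; the side premise is not needed at all. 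The genuinely load-bearing observations are the ones you make for (env\textsubscript{2}) (the new environment type is the \emph{subject} of the premise) and for (abs)/(prod) (the annotation $A$ is recovered from the extended environment $\Gamma, x{:}A$ of the premise), and those are handled correctly.
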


\begin{lemma}\label{lem_type_of_type}
If $\dseq{t:A}$ holds, then $A \equiv \Type$ or $\dseq{A:\kappa}$.
\end{lemma}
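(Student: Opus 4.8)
The plan is to argue by induction on the derivation of $\dseq{t:A}$, inspecting the last rule applied. Besides Lemma~\ref{lem_type_nowhere}, the argument relies on a few standard structural properties of \CC, all proved in \cite{HB92}: \emph{environment correctness} (if $\dseq{t:A}$ then $\wf{\Gamma}$, and every prefix of a well-formed environment is well-formed), the \emph{thinning} (weakening) lemma, the \emph{substitution} lemma (if $\dseq[\Gamma,x:A,\Gamma']{t:B}$ and $\dseq[\Gamma]{v:A}$ then $\dseq[\Gamma,\Gamma'\dsub{x}{v}]{t\dsub{x}{v} : B\dsub{x}{v}}$), and a \emph{generation} (inversion) lemma for products.

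Most cases are immediate. If the last rule is (ax) then $A \equiv \Type$ and we are in the first disjunct. If it is (conv) then the judgment $\dseq{A':\kappa}$ is literally one of the premises. If it is (prod) then $A$ is $\Prop$ or $\Type$; in the second case $A \equiv \Type$, and in the first case $\dseq{\Prop:\Type}$ follows from (ax), since the premise forces $\wf{\Gamma}$. If it is (abs), so $t \equiv \lambda x^{A_0}.u$ and $A \equiv \forall x^{A_0}.B$ with premise $\dseq[\Gamma,x:A_0]{B:\kappa}$, then a single application of (prod) to that very premise yields $\dseq{\forall x^{A_0}.B:\kappa}$, so no induction hypothesis is even needed. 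If it is (var), so $t \equiv x$ with $x:A$ occurring in $\Gamma$, then $A \not\equiv \Type$ by Lemma~\ref{lem_type_nowhere}; inverting the well-formedness derivation at the point where $x:A$ was introduced by (env\textsubscript{2}) gives $\dseq[\Delta]{A:\kappa}$ for the corresponding prefix $\Delta$ of $\Gamma$, and thinning extends this to $\dseq[\Gamma]{A:\kappa}$.

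The only case that uses the induction hypothesis essentially, and the main obstacle, is (app): here $t \equiv u\,v$, $A \equiv B\dsub{x}{v}$, with premises $\dseq{u:\forall x^{A_0}.B}$ and $\dseq{v:A_0}$. Applying the induction hypothesis to the first premise, and noting that $\forall x^{A_0}.B \not\equiv \Type$ syntactically, we obtain $\dseq{\forall x^{A_0}.B : \kappa}$ for some $\kappa$. The generation lemma for products then yields $\dseq[\Gamma,x:A_0]{B:\kappa'}$ with $\kappa' \beq \kappa$, hence $\kappa' \equiv \kappa$ since $\Prop$ and $\Type$ are distinct normal forms. Finally the substitution lemma, instantiated with $\dseq{v:A_0}$, gives $\dseq{B\dsub{x}{v} : \kappa\dsub{x}{v}}$, and since $\kappa$ is a constant $\kappa\dsub{x}{v} \equiv \kappa$, so $\dseq{A:\kappa}$ as required. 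The delicate points here are bureaucratic rather than conceptual: one must make sure the generation lemma for products is available without circularity (its proof does not invoke the present statement), and one must check that substitution leaves the sort $\kappa$ untouched.
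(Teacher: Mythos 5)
Your proof is correct: this is the standard ``correctness of types'' argument, done by induction on the derivation with the only substantive case being (app), where you correctly combine the induction hypothesis, generation for products, and the substitution lemma. The paper itself states this lemma without proof, deferring to \cite{TC85,HB92}, and your argument is essentially the one found there, so there is nothing to compare beyond noting that your auxiliary lemmas (environment correctness, thinning, generation) are indeed the ones those references establish independently of this statement.
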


\begin{proposition}\label{prop_subst}
\begin{itemize}
	\item[ (i)] If $\wf{\Gamma, x:A, \Gamma'}$ and $\dseq{u:A}$ hold, then $\wf{\Gamma,\Gamma'\dsub{x}{u}}$ also holds.
	\item[(ii)] If $\dseq[\Gamma, x:A, \Gamma']{t:B}$ and $\dseq{u:A}$ hold, then $\dseq[\Gamma,\Gamma'\dsub{x}{u}]{t\dsub{x}{u}:B\dsub{x}{u}}$ holds.
\end{itemize}
\end{proposition}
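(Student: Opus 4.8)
The plan is to prove (i) and (ii) \emph{simultaneously}, by a single induction on the structure (say, the height) of derivations, the induction hypothesis asserting both statements for every strictly shorter derivation. A simultaneous treatment is forced here because the two forms of judgment are mutually dependent: a typing derivation $\dseq[\Gamma,x:A,\Gamma']{t:B}$ contains well-formedness sub-derivations (through (ax) and (var)), while a well-formedness derivation $\wf{\Gamma,x:A,\Gamma'}$ contains a typing sub-derivation (through (env\textsubscript{2})). The proof then proceeds by case analysis on the last rule applied.

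Before the induction I would record a few routine prerequisites, all provable by prior easy inductions and available in \cite{TC85,HB92}: \emph{context validity} (every derivable judgment has a well-formed context); \emph{weakening} (if $\dseq[\Delta]{t:B}$ and $\Delta$ is a prefix of a well-formed $\Delta'$ then $\dseq[\Delta']{t:B}$); the \emph{term-level substitution lemma} $(t\dsub{y}{v})\dsub{x}{u} \equiv (t\dsub{x}{u})\dsub{y}{v\dsub{x}{u}}$ whenever $x \not\equiv y$ and $y \notin \VL(u)$ (both conditions hold in our situation, since $y$ is a variable declared in $\Gamma'$ while $u$ is typed in $\Gamma$); the fact that $\VL(u) \subseteq \VL(\Gamma)$ so that $x \notin \VL(u)$; and the fact that in a well-formed context the type of a variable only mentions variables declared before it, so that for any $z:C$ occurring in $\Gamma$ we have $x \notin \VL(C)$ and hence $C\dsub{x}{u} \equiv C$. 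The De~Bruijn/identifier convention of the footnote makes all these substitutions capture-free by construction, so no $\alpha$-renaming bookkeeping is needed.

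For (i): if $\Gamma'$ is empty the goal is $\wf{\Gamma}$, which follows from context validity applied to $\dseq{u:A}$. Otherwise $\Gamma' = \Gamma'', z:C$, the last rule is (env\textsubscript{2}) with premises $\dseq[\Gamma,x:A,\Gamma'']{C:\kappa}$ and $z \notin \VL(\Gamma,x:A,\Gamma'')$; the induction hypothesis (ii) gives $\dseq[\Gamma,\Gamma''\dsub{x}{u}]{C\dsub{x}{u}:\kappa}$ (using $\kappa\dsub{x}{u}\equiv\kappa$), $z$ is still fresh for $\Gamma,\Gamma''\dsub{x}{u}$, and (env\textsubscript{2}) reassembles the goal. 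For (ii): the (ax) case is immediate using (i); the (prod), (abs), (app) and (conv) cases follow by applying the induction hypothesis to the premises and reassembling with the same rule, using the term substitution lemma to rewrite the resulting type in the (app) case ($(B\dsub{y}{t_2})\dsub{x}{u} \equiv (B\dsub{x}{u})\dsub{y}{t_2\dsub{x}{u}}$) and using that $A \beq A'$ implies $A\dsub{x}{u} \beq A'\dsub{x}{u}$ in the (conv) case. The one case with real content is (var), where $t$ is a variable $z$: if $z$ is declared in $\Gamma$ then $z\dsub{x}{u}\equiv z$ and $B\dsub{x}{u}\equiv B$ and we reapply (var) in $\Gamma,\Gamma'\dsub{x}{u}$ (well-formed by (i)); if $z \equiv x$ then $B\equiv A$, $B\dsub{x}{u}\equiv A$ and $z\dsub{x}{u}\equiv u$, so the goal $\dseq[\Gamma,\Gamma'\dsub{x}{u}]{u:A}$ follows from $\dseq{u:A}$ by weakening; if $z$ is declared in $\Gamma'$ then $z\dsub{x}{u}\equiv z$ and its declared type is the substituted one, and (var) applies directly.

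The main obstacle is precisely the (var) case, and more specifically the two prerequisites it forces into the picture: weakening, needed to transport $\dseq{u:A}$ into the enlarged substituted context $\Gamma,\Gamma'\dsub{x}{u}$, and the structural fact that a variable declared before $x:A$ has a type in which $x$ does not occur, so that the substitution acts trivially on it. Once these are in hand the remaining cases are bureaucratic; the only subtlety in the main argument itself is ensuring the simultaneous induction is well-founded, which the formulation "induction on derivation height with a combined hypothesis for both (i) and (ii)" settles cleanly.
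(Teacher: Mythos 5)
The paper does not give its own proof of this proposition: it is stated as a known elementary result with the proof delegated to \cite{TC85,HB92}. Your argument --- simultaneous induction on the height of the two mutually dependent judgment forms, with the (var) case split into the three positions of the variable relative to $x$, supported by weakening, context validity, the substitution-composition identity, and the fact that types in $\Gamma$ cannot mention $x$ --- is precisely the standard proof found in those references, and it is correct as written.
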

\section{Pedagogizing \CC}\label{sec_pedagogizing}

\subsection{The Poincar\'e criterion}\label{sec_poincare}

Let us recall the necessity of the pedagogical constraint ---here in the case of definitions by postulate--- by the following quotation:
\begin{quotation} \em
A definition by postulate has value only when the existence of the object defined has been proved. In mathematical language, this means that the postulate does not imply a contradiction, we do not have the right to neglect this condition. Either it is necessary to admit the absence of contradiction as an intuitive truth, as an axiom, by a kind of act of faith ---but then it is necessary to realize what we are doing and to remember that we have extended the list of indemonstrable axioms--- or else it is necessary to construct a formal proof, either {\em by means of examples} or by the use of reasoning by recurrence. Not that this proof is less necessary when a direct definition is involved, but it is generally easier.
\sourceatright{Henri Poincar\'e -- Last thoughts~\cite{HP13}}
\end{quotation}

In \CC, a definition by postulate of an object $x$ may be seen as an environment containing $x$ followed by hypotheses about $x$. For instance,
\begin{quote}
Let $x$ be a natural number verifying $P(x)$ and $Q(x)$.
\end{quote}
is formally represented in \CC by the following environment
$$
x : \mathbb{N}, H_1: P(x), H_2: Q(x)
$$

Poincar\'e pointed out that such a set of hypotheses is an admissible definition by postulate of $x$ {\em only if} we are able to exhibit a natural satisfying both predicates $P$ and $Q$. In other words, types $P(x)$ and $Q(x)$ must be inhabited for a given $x$ (say $n$) in \CC. Namely the following statements must hold:
$$
\begin{array}{c@{\qquad}c@{\qquad}c}
\dseq[]{n:\mathbb{N}} &
\dseq[]{t_1:P(n)} &
\dseq[]{t_2:Q(n)}
\end{array}
$$
If this is not possible (i.e. there is no such $n$, $t_1$ or $t_2$) then the definition is meaningless and should be avoided.

Let us generalize to any environment:

\begin{definition}[(Poincar\'e criterion)]
The environment $x_1:A_1,..,x_n:A_n$ is respectful of the Poincar\'e criterion {\em only if} there exists terms $t_1,..,t_n$ such that the following judgments are derivable:
$$
\begin{array}{c}
\dseq[]{t_1:A_1} \\
\dseq[]{t_2:A_2\dsub{x_1}{t_1}} \\
\vdots \\
\dseq[]{t_n:A_n\dsub{x_1,\ldots,x_{n-1}}{t_1,\ldots,t_{n-1}}}
\end{array}
$$

A formal system is said to meet the Poincar\'e criterion {\em only if} every well-formed environment are respectful of the Poincar\'e criterion.
\end{definition}

\subsection{On the naive extension of previous work}\label{sec_naive_extension}
In the previous works on pedagogization [see section~\ref{sec_intro}], each environment is motivated before being used. It is then immediate that each used environment can be motivated, hence such a system trivially satisfies the Poincar\'e criterion. Unfortunately such a simple adjustment can not be performed into \CC.

\medskip

The straightforward extension of the previous work to \CC can be summed up by the following changes:
\begin{itemize}
\item remove (env\textsubscript{1}) and (env\textsubscript{2}) rules;
\item replace (ax) and (var) rules by these ones:
	\dlinetwo{
		\drule[(ax)]{\sigma\cdot\Gamma}{\dseq{o:\top:\Prop:\Type}}
	}{
		\drule[(var)]{
		\sigma\cdot(\Gamma,x:A,\Gamma')
		}{
		\dseq[\Gamma,x:A,\Gamma']{x:A}
		}
	}
\end{itemize}
where
\begin{itemize}
\item $\sigma$ is the substitution $[x_1 \mapsto t_1; \ldots; x_n \mapsto t_n]$ when $\Gamma\equiv x_1:A_1, \ldots, x_n:A_n$, and $\sigma\cdot\Gamma$ denotes the judgments:
$$
\begin{array}{c}
\dseq[]{t_1:A_1} \\
\dseq[]{t_2:A_2\dsub{x_1}{t_1}} \\
\vdots \\
\dseq[]{t_n:A_n\dsub{x_1,\ldots,x_{n-1}}{t_1,\ldots,t_{n-1}}}
\end{array}
$$
\item $o$ and $\top$ are two added constants in order to be able to begin derivations (like in\cite{LCDM09}).
\end{itemize}
In this subsection, we refer to this system as $P$, and index its judgments by $p$.

\bigskip

$P$ is not a subsystem of \CC:
\begin{lemma}
The following derivations hold in $P$ but not in \CC:
\begin{enumerate}
 \item[(a)] $\dseqp[x_1:\Type]{\Prop:\Type}$
 \item[(b)] $\dseqp[x_1:\Prop, x_2:(\lambda H^{\top \imp x_1}.\top)\ (\lambda y^{\top}.y) ]{\Prop:\Type}$
 \item[(c)] $\dseqp[x_1:\mathbb{N}, x_2:(\lambda H^{x_1=0}.\top)\ (\lambda P^{\mathbb{N} \imp \Prop}.\lambda H^{P\ 0}.H) ]{\Prop:\Type}$
\end{enumerate}
\end{lemma}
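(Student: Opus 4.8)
The plan is to prove each of (a), (b), (c) by exhibiting an explicit derivation in the naive system $P$, and then arguing that the displayed judgment is \emph{not} derivable in \CC by invoking Lemma~\ref{lem_type_nowhere}. The key observation for the negative direction is uniform: in every case the environment on the left contains an entry whose type (after the relevant reductions) is \emph{not} a well-formed type of \CC, so by Lemma~\ref{lem_type_nowhere} (more precisely, by the environment-formation rules of \CC combined with that lemma) there is no \CC-derivation with that environment, hence in particular none of $\Gamma \vdash \Prop:\Type$. For (a), $x_1:\Type$ is already illegal in \CC since $\Type$ never appears in a well-formed environment; for (b) and (c) the offending entry is the type $x_2:t$ where $t$ is a $\beta$-redex that reduces to $\top$ (resp. to something built from $\top$), but $\top$ is one of the \emph{added} constants of $P$ and is not a term of \CC at all, so $x_2$'s declared type is not typable in \CC.

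For the positive direction in $P$, in each case I would first check that the environment is \emph{motivable}, i.e. produce the substitution $\sigma$ witnessing $\sigma\cdot\Gamma$, and then apply the new (ax) rule to conclude $\vdash_p \Prop:\Type$ (discarding the $o:\top$ part, which is harmless). For (a): the environment $x_1:\Type$ is motivated by $[x_1\mapsto \Prop]$, since $\vdash_p \Prop:\Type$ holds (itself by (ax) on the empty environment); so (ax) gives $x_1:\Type \vdash_p \Prop:\Type$. For (b): one motivates $x_1:\Prop$ by $[x_1\mapsto \top]$ (using $\vdash_p \top:\Prop$), and then one must give $t_2$ with $\vdash_p t_2 : \big((\lambda H^{\top\imp x_1}.\top)\,(\lambda y^\top.y)\big)\dsub{x_1}{\top}$; the point is that this type $\beta$-reduces to $\top$, so by (conv) it suffices to give $t_2 = o$ (or any inhabitant of $\top$), after checking the redex $(\lambda H^{\top\imp\top}.\top)\,(\lambda y^\top.y)$ is well-typed --- which it is, because $\lambda y^\top.y : \top\imp\top$. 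For (c): motivate $x_1:\mathbb{N}$ by $0$, then supply $t_2$ for the type $(\lambda H^{0=0}.\top)\,(\lambda P^{\mathbb{N}\imp\Prop}.\lambda H^{P\,0}.H)$, which again $\beta$-reduces to $\top$; here the redex is well-typed because $\lambda P^{\mathbb{N}\imp\Prop}.\lambda H^{P\,0}.H$ has type $\forall P^{\mathbb{N}\imp\Prop}.(P\,0\imp P\,0)$, and instantiating $P$ appropriately (or using conversion) matches the expected domain $0=0$ under Leibniz equality. So again (conv) plus (ax) closes it.

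The main obstacle I anticipate is \emph{not} the $P$-derivations, which are short, but making the non-derivability-in-\CC argument airtight: one must be careful that "the type of $x_2$ reduces to $\top$" is not by itself a contradiction in \CC (\CC is closed under conversion), so the real content is that the \emph{syntax} of that type already mentions the foreign constant $\top$ (and $o$), which simply does not exist in \CC's term language, whence no \CC-judgment can have such an environment. For (a) the argument is cleanest: it is a direct instance of the first half of Lemma~\ref{lem_type_nowhere}. For (b) and (c) I would phrase it as: any \CC-derivation of $\Gamma\vdash\Prop:\Type$ requires $\wf{\Gamma}$, which by (env$_2$) requires the type of $x_2$ to be \CC-typable, which is impossible since it contains $\top$ (and, for the witness terms, $o$), constants absent from \CC. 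A secondary point worth stating explicitly is why these examples are \emph{interesting}: they show that admitting arbitrary $\beta$-redexes as environment types --- forced once we drop (env$_1$),(env$_2$) and only motivate environments --- lets $P$ type-check environments that \CC rejects, so $P$ fails to be a subsystem of \CC, which is exactly the negative result motivating the move to \CCr in the next section. I would not belabor the routine typing checks for the redexes; a one-line "routine verification" remark suffices there.
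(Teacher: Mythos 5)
Your treatment of the positive direction (exhibiting the motivations $[x_1\mapsto\Prop]$, $[x_1\mapsto\top;x_2\mapsto o]$, $[x_1\mapsto 0;x_2\mapsto o]$ and closing with the new (ax) rule) matches the paper, as does case (a) of the negative direction via Lemma~\ref{lem_type_nowhere}. But your non-derivability argument for (b) and (c) rests on the wrong reason, and you explicitly wave away the right one. You claim the ``real content'' is that the type of $x_2$ mentions the foreign constant $\top$ (and $o$), which ``simply does not exist in \CC's term language.'' This is fragile at best: $\top$ and $o$ are the standard definable terms $\forall A^{\Prop}.A\imp A$ and $\lambda A^{\Prop}.\lambda x^A.x$ --- the paper itself uses exactly these definitions in Lemma~\ref{cocr_not_empty} --- so under the natural reading the type of $x_2$ is built entirely from \CC syntax and your argument evaporates. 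Worse, if the examples really failed only because of an undefined symbol, (b) and (c) would be trivial restatements of (a) and would carry no information about dependent types, contradicting the paper's own remark that ``those examples involve dependent types.''

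The actual reason, and the paper's, is a type mismatch in the application forming the type of $x_2$, caused by the free variable $x_1$. In (b), the function $\lambda H^{\top\imp x_1}.\top$ expects an argument of type $\top\imp x_1$, but $\lambda y^{\top}.y$ has type $\top\imp\top$, and $\top\imp\top$ is not $\beta$-convertible to $\top\imp x_1$ when $x_1$ is a variable; in (c), a proof of $0=0$ is passed where a proof of $x_1=0$ is expected. Hence the redex is ill-typed in \CC, the environment is not well-formed, and no judgment in that environment is derivable. Your aside that ``the type of $x_2$ reduces to $\top$ is not by itself a contradiction in \CC'' has it backwards: subject reduction does not give subject \emph{expansion}, so an ill-typed redex reducing to a well-typed term is precisely the phenomenon being exhibited. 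The whole point of (b) and (c) --- which your proof misses --- is that $P$ checks environments only \emph{after} the motivating substitution (which sends $x_1$ to $\top$, resp.\ $0$, making the application well-typed), whereas \CC must type the environment as it stands, with $x_1$ free.
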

\begin{proof}
Proofs that derivations hold in $P$ are trivial as soon as we exhibit a motivation:
\begin{enumerate}
 \item[(a)] $\sigma_1:=[x_1 \mapsto \Prop]$
 \item[(b)] $\sigma_2:=[x_1 \mapsto \top; x_2 \mapsto o]$
 \item[(c)] $\sigma_3:=[x_1 \mapsto 0; x_2 \mapsto o]$
\end{enumerate}
And it is easy to see that they are not derivable in \CC:
\begin{enumerate}
 \item[(a)] $\Type$ appears into an environment, which is forbidden in \CC [see lemma~\ref{lem_type_nowhere}];
 \item[(b)] $(\lambda H^{\top \imp x_1}.\top)\ (\lambda y^{\top}.y)$ is ill-typed since the function waits for a element of type $\top \imp x_1$, but an element of type $\top \imp \top$ is given instead;
 \item[(c)] same reason as for (b): the function waits for a proof of $x_1=0$, whereas a proof of $0 = 0$ is passed.
\end{enumerate}
\qed
\end{proof}

\begin{remark}
Those examples involve dependent types. It seems that this naive extension can work for $\lambda^{\omega}$ [see\cite{DM08}].
\end{remark}
\begin{remark}
The first case can be avoided by enforcing the $A_i$ to be of type $\Prop$ or $\Type$ in the definition of $\sigma\cdot\Gamma$.
\end{remark}

\CC has the advantage that well-formed types are built into the system. So we just need to find which rules need to be constrained and how in order to avoid not motivable types (i.e. empty types).

\subsection{A simple attempt: \CCr}

In \CC, we are able to introduce $\bot:=\forall A^{\Prop}.A$ as an hypothesis if we have been able to derive $\bot$ as a type, which is allowed by the (prod) rule. Actually, the (prod) rule is the only one able to create vacuity, since other rules construct types and an inhabitant of it simultaneously. We then impose products to always be inhabited by replacing the usual (prod) rule of \CC by the following more restrictive one:
$$
	\drule[(prod\textsubscript{r})]{
	\dseqr[\Gamma,x:A]{\mathbf{t}:B:\kappa}
	}{
	\dseqr{\forall x^A.B:\kappa}
	}
$$
This rule may be condensed together with (abs) to obtain a rule with two conclusions. So the resulting calculus can be viewed as \CC without the (prod) rule.

From now on we will refer to the resulting calculus as \CCr, whose judgments will be indexed by $r$.

Usual properties of \CC from \cite{TC85} still hold for this calculus, especially substitution (prop.\ref{prop_subst} above), weakening and the well-known ``subject reduction'' (stability by reduction). These were formally checked in the Coq proof assistant by straightforward adaptation of the work in\cite{Bar96}.

\subsection*{Example of derivation in \CCr}
\begin{lemma}\label{cocr_not_empty}
The following rule is derivable:
$$
\begin{prooftree}
\wfr{\Gamma}
\justifies
\dseqr{o:\top:\Prop}
\end{prooftree}
$$
where $o := \lambda A^{\Prop}. \lambda x^{A}. x$ and $\top := \forall A^{\Prop}.A \imp A$.
\end{lemma}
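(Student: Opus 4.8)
The plan is to exhibit the derivation explicitly, built bottom-up, with the guiding observation that $\top$ is precisely the type of the polymorphic identity, so the inhabitant demanded by the restricted rule (prod\textsubscript{r}) is available at every stage. First I would set up two auxiliary well-formed environments: from $\wfr{\Gamma}$ the rule (ax) gives $\dseqr{\Prop:\Type}$, so (env\textsubscript{2}) (instantiated with $\kappa=\Type$) gives $\wfr{\Gamma,A:\Prop}$ for a fresh $A$; applying (var) then yields $\dseqr[\Gamma,A:\Prop]{A:\Prop}$, and one more use of (env\textsubscript{2}) gives $\wfr{\Gamma,A:\Prop,x:A}$ for a fresh $x$. (With de Bruijn indices freshness is automatic, so there is nothing to verify here.)

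Next I would derive the ``core'' judgment $\dseqr[\Gamma,A:\Prop]{\lambda x^A.x : (A\imp A) : \Prop}$. Two applications of (var) to $\wfr{\Gamma,A:\Prop,x:A}$ give $\dseqr[\Gamma,A:\Prop,x:A]{x:A}$ and $\dseqr[\Gamma,A:\Prop,x:A]{A:\Prop}$, i.e. the combined judgment $\dseqr[\Gamma,A:\Prop,x:A]{x:A:\Prop}$. Feeding this to (abs) yields $\dseqr[\Gamma,A:\Prop]{\lambda x^A.x : \forall x^A.A}$, and $\forall x^A.A$ is by definition $A\imp A$; feeding the very same judgment to (prod\textsubscript{r}), with the required inhabitant $\mathbf{t}$ taken to be $x$ itself, yields $\dseqr[\Gamma,A:\Prop]{(A\imp A):\Prop}$. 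Together these constitute the core judgment.

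Finally I would repeat the same pattern one level up. Applying (abs) to the core judgment gives $\dseqr{\lambda A^\Prop.\lambda x^A.x : \forall A^\Prop.(A\imp A)}$, that is $\dseqr{o:\top}$; applying (prod\textsubscript{r}) to the core judgment, now with inhabitant $\mathbf{t}$ taken to be $\lambda x^A.x$, gives $\dseqr{\forall A^\Prop.(A\imp A):\Prop}$, that is $\dseqr{\top:\Prop}$. The conjunction of these two is exactly $\dseqr{o:\top:\Prop}$, as required.

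There is no genuine obstacle here: the argument is a routine derivation-tree construction, and essentially the only point that must be watched is that each instance of (prod\textsubscript{r}) is supplied with an explicit inhabitant of the product being formed. This is guaranteed by design, since every product arising along the way ($A\imp A$ and then $\forall A^\Prop.(A\imp A)$) is inhabited by the corresponding sub-term of $o$ — which is precisely why the lemma holds in \CCr and not merely in \CC.
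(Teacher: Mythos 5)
Your derivation is correct and is essentially the same as the paper's own proof: the same eight-step bottom-up construction via (ax), (env\textsubscript{2}), (var), and then (abs) together with (prod\textsubscript{r}) at each of the two levels, with the subterms $x$ and $\lambda x^A.x$ serving as the required inhabitants. The paper merely condenses each (abs)/(prod\textsubscript{r}) pair into a single ``abs+prod'' step, which you spell out explicitly.
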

\begin{proof}
$$
\begin{array}{l@{\,}l@{~}l}
1.& \wfr{\Gamma}                                                       & \text{(hyp)} \\
2.& \dseqr{\Prop:\Type}                                                & \text{(ax 1)} \\
3.& \wfr{\Gamma,A:\Prop}                                               & \text{(env\textsubscript{2} 2)} \\
4.& \dseqr[\Gamma,A:\Prop]{A:\Prop}                                    & \text{(var 3)} \\
5.& \wfr{\Gamma,A:\Prop,x:A}                                           & \text{(env\textsubscript{2} 4)} \\
6.& \dseqr[\Gamma,A:\Prop,x:A]{x:A:\Prop}                              & \text{(var 5)} \\
7.& \dseqr[\Gamma,A:\Prop]{\lambda x^{A}.x:A \imp A:\Prop}             & \text{(abs+prod 6)} \\
8.& \dseqr{\lambda A^{\Prop}.\lambda x^{A}.x:\forall A^{\Prop}.A \imp A:\Prop}  & \text{(abs+prod 7)} \\
\end{array}
$$
\end{proof}
\section{\CCr meets the Poincar\'e criterion}\label{sec_ccr_pedag}
In this section we show that every type (term of sort $\Prop$ or $\Type$) in a well-formed environment of \CCr is inhabited. A sketch of the proof is: we first notice that in \CCr every product is inhabited, then, because each closed type reduces to a product, we can inhabit every type of a well-formed environment (beginning by its leftmost type, which is closed).

\begin{lemma}\label{cocr_prod_env}
If $\dseqr{\forall x^A.B : T}$ holds, then there exists $\kappa$ and a term $t$ such that $\dseqr{t:\forall x^A.B}$ and $T\beq \kappa$.
\end{lemma}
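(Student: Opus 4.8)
The plan is to show that any product type $\forall x^A.B$ that is derivable as a type in \CCr is in fact inhabited, and that its "sort" $T$ is convertible to some $\kappa\in\{\Prop,\Type\}$. The key observation is that the restricted product rule $(\text{prod}_r)$ is the \emph{only} way to introduce a product into the type position, and unlike the usual $(\text{prod})$ rule it carries a premise $\dseqr[\Gamma,x:A]{\mathbf{t}:B:\kappa}$, i.e.\ it simultaneously produces a witness of the body $B$. So the inhabitant of $\forall x^A.B$ will essentially be $\lambda x^A.\mathbf{t}$, obtained by combining that same premise with the $(\text{abs})$ rule.

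First I would proceed by induction on the derivation of $\dseqr{\forall x^A.B:T}$, examining the last rule used. The last rule can only be $(\text{prod}_r)$ or $(\text{conv})$; it cannot be $(\text{var})$ (a variable's type $A$ is a subterm of a well-formed environment, and by Lemma~\ref{lem_type_nowhere} / the shape of environments that is fine, but a variable is not syntactically a $\forall$) --- more carefully, $(\text{var})$, $(\text{ax})$ and $(\text{app})$ could in principle yield a term whose \emph{type} is $\forall x^A.B$, but here the term in subject position \emph{is} the product $\forall x^A.B$ itself, so the relevant cases for the subject being a $\Pi$-term are $(\text{prod}_r)$ and $(\text{conv})$ only. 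In the $(\text{prod}_r)$ case the premise is exactly $\dseqr[\Gamma,x:A]{\mathbf{t}:B:\kappa}$ with $T\equiv\kappa$; applying $(\text{abs})$ to this premise gives $\dseqr{\lambda x^A.\mathbf{t}:\forall x^A.B}$, so we take $t:=\lambda x^A.\mathbf{t}$ and we are done with $T\beq\kappa$ trivially. In the $(\text{conv})$ case we have $\dseqr{\forall x^A.B:T'}$ with $T'\beq T$ and $\dseqr{T:\kappa'}$ for some derivation; by the induction hypothesis there is $t$ with $\dseqr{t:\forall x^A.B}$ and $T'\beq\kappa$ for some $\kappa\in\{\Prop,\Type\}$, hence $T\beq T'\beq\kappa$, and the same $t$ works.

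The step I expect to be the main obstacle is the bookkeeping about which last rules are actually possible when the subject is syntactically a product $\forall x^A.B$ --- i.e.\ a small generation/inversion argument --- and, relatedly, making sure the side condition "$T\beq\kappa$ for some $\kappa\in\{\Prop,\Type\}$" is maintained through $(\text{conv})$ steps rather than just "$T$ is a type". This is where one uses Lemma~\ref{lem_type_of_type}: the type $T$ of a product is either $\equiv\Type$ or is itself typable, and combined with the fact that in \CCr (as in \CC) the only sorts are $\Prop$ and $\Type$ and $\Type$ has no type, one pins $T$ down to being convertible to $\Prop$ or $\Type$. Everything else --- the $(\text{abs})$ application, the transitivity of $\beq$ --- is routine.
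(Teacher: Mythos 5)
Your proposal is correct and follows essentially the same argument as the paper: induction on the derivation, observing that the last rule must be (prod\textsubscript{r}) or (conv), building the inhabitant via (abs) from the premise of (prod\textsubscript{r}) in the first case and invoking the induction hypothesis (with transitivity of $\beq$) in the second. Your version simply spells out the inversion bookkeeping that the paper's one-line proof leaves implicit.
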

\begin{proof}
By induction on the derivation: if the last used rule is (prod) then we build $t$ by (abs) rule, and if it is (conv) then we apply induction hypothesis to get $t$.
\qed
\end{proof}

\begin{lemma}\label{cocr_lem_prod_inhab_type}
If $\dseqr{B:\Type}$ holds, then there exists a term $t$ such that $\dseqr{t:B}$ is derivable.
\end{lemma}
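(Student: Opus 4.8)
The plan is to invert the last rule of the derivation of $\dseqr{B:\Type}$ and show that $B$ must be either the sort $\Prop$ or a product $\forall x^A.B'$, after which Lemma~\ref{cocr_not_empty} (resp. Lemma~\ref{cocr_prod_env}) supplies the inhabitant. The key observation is that very few rules of \CCr\ can have a conclusion whose type is syntactically the constant $\Type$. It cannot be (var), for that would force $\Type$ to occur in $\Gamma$, contradicting Lemma~\ref{lem_type_nowhere}; it cannot be (abs), whose conclusion always carries a product type; it cannot be (conv), since (conv) concludes $\dseqr{t:A'}$ only when $\dseqr{A':\kappa}$, whereas $\Type$ is not typeable (Lemma~\ref{lem_type_nowhere} again); and it cannot be (app), since $\dseqr{u\,v:\Type}$ would require either $v\equiv\Type$ (impossible by Lemma~\ref{lem_type_nowhere}) or $\dseqr{u:\forall x^A.\Type}$, while $\forall x^A.\Type$ is not typeable — by Lemma~\ref{lem_type_of_type} it would need a sort for a type, and a short induction on derivations rules this out, its only possible last rule being $(\text{prod}\textsubscript{r})$ with the impossible premise $\dseqr[\Gamma,x:A]{\mathbf{t}:\Type:\kappa}$. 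Hence the last rule is (ax), with $B\equiv\Prop$, or $(\text{prod}\textsubscript{r})$, with $B\equiv\forall x^A.B'$.

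In the first case the premise of (ax) is exactly $\wfr{\Gamma}$, so Lemma~\ref{cocr_not_empty} gives $\dseqr{o:\top:\Prop}$, and in particular $\dseqr{\top:\Prop}$; thus $t:=\top$ inhabits $B\equiv\Prop$. In the second case $\dseqr{\forall x^A.B':\Type}$ is precisely the hypothesis of Lemma~\ref{cocr_prod_env} (taken with $T:=\Type$), which hands us a term $t$ with $\dseqr{t:\forall x^A.B'}$, i.e. $\dseqr{t:B}$. Concretely, this $t$ is $\lambda x^A.u$, where $u$ is the inhabitant of $B'$ already recorded in the premise $\dseqr[\Gamma,x:A]{\mathbf{t}:B':\Type}$ of the $(\text{prod}\textsubscript{r})$ step, re-abstracted by (abs).

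I expect the only mildly delicate point to be the inversion, and inside it the fact that neither $\Type$ nor any $\forall x^A.\Type$ can be typed — this is what excludes the (conv) and (app) cases; everything else is a direct appeal to Lemmas~\ref{cocr_not_empty} and~\ref{cocr_prod_env}. An alternative that sidesteps the (app)/(conv) bookkeeping is to use strong normalisation and confluence to reduce $B$ to its normal form $B_0$, which still satisfies $\dseqr{B_0:\Type}$ by subject reduction; $B_0$ can then only be $\Prop$ or a product (it is not $\Type$ by Lemma~\ref{lem_type_nowhere}, not a $\lambda$-abstraction since those carry product types, and not a normal neutral term since its type would again involve an untypeable $\forall x^A.\Type$); one inhabits $B_0$ as above and pulls the inhabitant back along $B\beq B_0$ by (conv), which is legitimate because $\dseqr{B:\Type}$ is itself an instance of the side premise $\dseqr{A':\kappa}$ with $\kappa:=\Type$.
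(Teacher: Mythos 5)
Your proof is correct and follows essentially the same route as the paper's: a case analysis on the last rule of the derivation of $\dseqr{B:\Type}$, eliminating (var), (app) and (conv) via Lemmas~\ref{lem_type_nowhere} and~\ref{lem_type_of_type}, settling (ax) with Lemma~\ref{cocr_not_empty} and (prod\textsubscript{r}) with the (abs) rule. You merely spell out the eliminations (and an optional normal-form variant) in more detail than the paper does.
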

\begin{proof}
By cases on the last applied rule; (ax) case is dealt with lemma~\ref{cocr_not_empty}; (var), (app) and (conv) cases are eliminated using lemmas~\ref{lem_type_nowhere}  and~\ref{lem_type_of_type}; (prod) case is trivial using (abs) rule.
\qed
\end{proof}
Indeed, every element of type $\Type$ is syntactically of the form $\forall \vec{x}^{\vec{A}}.\Prop$, and then trivially inhabited by $\lambda \vec{x}^{\vec{A}}.\top$.

\begin{lemma}\label{cocr_lem_prod_inhab_prop}
If $\dseqr{B:\forall \vec{x}^{\vec{A}}.\Prop}$ holds with $B$ closed, then for all closed terms $w_1,\ldots,w_n$ verifying
$$
\begin{array}{c}
\dseqr{w_1:A_1} \\
\dseqr{w_2:A_2\dsub{x_1}{w_1}} \\
\vdots \\
\dseqr{w_n:A_n\dsub{x_1,\ldots,x_{n-1}}{w_1,\ldots,w_{n-1}}}
\end{array}
$$
there exists a term $t$ such that
$$
\dseqr{t:B\ \vec{w}}
$$
\end{lemma}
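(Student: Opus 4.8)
The plan is to prove this by induction on the structure of the derivation of $\dseqr{B:\forall \vec{x}^{\vec{A}}.\Prop}$, mirroring the case analysis used in Lemma~\ref{cocr_lem_prod_inhab_type} but now keeping track of the closing context of universally quantified variables $\vec{x}^{\vec{A}}$ that must be instantiated by the witnesses $\vec{w}$. The key observation, as the surrounding text suggests, is that a closed term $B$ of type $\forall \vec{x}^{\vec{A}}.\Prop$ must itself reduce to something of the form $\forall \vec{y}^{\vec{C}}.\Prop'$ where the number and types of the leading products match $\vec{A}$ up to $\beq$; so applying $B$ to $\vec{w}$ lands us at (a type $\beq$-equal to) a term of sort $\Prop$, i.e.\ a genuine proposition, and the task reduces to inhabiting that proposition.

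First I would set up the induction and dispatch the easy cases exactly as before: the (ax) case cannot produce a $B$ of the required form with a nonempty quantifier prefix and is anyway handled by noting $\Prop:\Type$, the (var) and (conv) cases are massaged using Lemmas~\ref{lem_type_nowhere} and~\ref{lem_type_of_type} together with confluence (a $\beq$-class containing $\forall \vec{x}^{\vec{A}}.\Prop$ has a canonical product shape), and the (app) case is where the induction hypothesis gets used after peeling one argument. The substantive case is (prod): here $B \equiv \forall x^{A}.B'$ with $\dseqr[\Gamma,x:A]{B':\kappa}$, and matching against $\forall \vec{x}^{\vec{A}}.\Prop$ forces $A \beq A_1$ and $B' : \forall x_2^{A_2}\cdots x_n^{A_n}.\Prop$ in the extended environment. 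Using Proposition~\ref{prop_subst}(ii) to substitute the closed witness $w_1$ (which has type $A_1$, hence type $A$ after a (conv)) for $x$, I obtain $\dseqr{B'\dsub{x}{w_1}:\forall x_2^{A_2\dsub{x_1}{w_1}}\cdots.\Prop}$ with $B'\dsub{x}{w_1}$ closed, and the remaining witnesses $w_2,\ldots,w_n$ satisfy exactly the hypotheses the induction hypothesis demands. Then $B'\dsub{x}{w_1}\ w_2\cdots w_n \beq B\ \vec{w}$, so a (conv) transports the inhabitant.

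At the base of the recursion — when the quantifier prefix is empty — we are reduced to: given $\dseqr{B:\Prop}$ with $B$ closed, produce $t$ with $\dseqr{t:B}$. This is really the heart of the statement and is where I expect the main obstacle to lie. The idea is that $B$ is closed and well-typed in the empty context, so by strong normalization and confluence it reduces to a normal form which, being of type $\Prop$ in the empty environment, must be a product $\forall \vec{y}^{\vec{C}}.\Prop$ (a head variable is impossible in the empty context, and the only other $\Prop$-typed normal atoms would again have to be such products); one then applies Lemma~\ref{cocr_prod_env} to get an inhabitant of that product, and a final (conv) gives an inhabitant of $B$. The delicate points to get right are: justifying rigorously that a closed normal form of type $\Prop$ in the empty context has this product shape (this needs a small classification-of-normal-forms argument, using Lemma~\ref{lem_type_nowhere} to rule out $\Type$ and the emptiness of the context to rule out variable heads and applications), and threading the $\beq$-conversions cleanly through the (conv) rule while checking the side conditions $\dseqr{-:\kappa}$ each time. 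Everything else — the substitution bookkeeping with the $\vec{w}$, the $\beq$ between $B\ \vec{w}$ and the substituted instance — is routine given Proposition~\ref{prop_subst} and confluence.
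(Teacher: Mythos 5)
There is a genuine gap, and it sits exactly where the paper has to work hardest. You propose ``induction on the structure of the derivation'' of $\dseqr{B:\forall \vec{x}^{\vec{A}}.\Prop}$, but in your substantive case the recursive call is made on $B'\dsub{x}{w_1}$, whose typing judgment is produced by the substitution property (Prop.~\ref{prop_subst}(ii)); the resulting derivation is \emph{not} a subderivation of the one you started from and may well be larger, so structural induction does not license the call and you supply no alternative well-founded measure. This is precisely the point the paper's proof is built around: it inducts on the lexicographic pair of $\norm{B\ \vec{w}}$ (the longest reduction length, available by strong normalization) and the height of the derivation, noting that in the abstraction case $\norm{u\dsub{x_1}{w_1}\ w_2 \cdots w_n} < \norm{(\lambda x_1^{A_1}.u)\ \vec{w}}$, while in the (app) and (conv) cases the first component is unchanged but the height drops. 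That measure is the missing idea. A secondary but real confusion: you label the key case ``(prod)'' and write $B\equiv\forall x^A.B'$. A term whose \emph{type} is $\forall\vec{x}^{\vec{A}}.\Prop$ with $n\ge 1$ cannot itself be a product (a product inhabits a sort), and $(\forall x^A.B')\ w_1$ neither type-checks nor $\beta$-reduces to $B'\dsub{x}{w_1}$; the case you mean is (abs), with $B\equiv\lambda x_1^{A_1}.u$. The genuine (prod\textsubscript{r}) case arises only for $n=0$ and is trivial, since its premise already yields an inhabitant via (abs).

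Your treatment of the empty-prefix case is, by contrast, a potentially interesting deviation: normalize the closed term of type $\Prop$, argue that a closed normal form of type $\Prop$ must be a product (no variable can head a neutral term in the empty context), and invoke Lemma~\ref{cocr_prod_env} plus (conv). If carried out rigorously this applies directly to $B\ \vec{w}$ itself --- which is closed and of type $\Prop$ by $n$ applications of (app) --- and would short-circuit the entire induction. But it depends on subject reduction together with a classification of closed normal forms that the paper never establishes and that you only assert (and note that such a normal form has the shape $\forall\vec{y}^{\vec{C}}.D$ with $D$ a proposition, typically headed by a variable, \emph{not} $\forall\vec{y}^{\vec{C}}.\Prop$, which is the shape of inhabitants of $\Type$ described after Lemma~\ref{cocr_lem_prod_inhab_type}). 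As written, the proposal does not constitute a proof.
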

\begin{proof}
Let us define by $\norm{t}$ the length of the longest path of reduction from the term $t$ to its normal form (which exists because terms of \CCr are strongly normalizing).

We proceed by induction on the lexicographical order of $\norm{B\ \vec{w}}$ and the height of the derivation of $\dseqr{B:\forall \vec{x}^{\vec{A}}.\Prop}$.

Let us deal with non-trivial cases (others being mostly eliminated by lemmas~\ref{lem_type_nowhere} and~\ref{lem_type_of_type}):

\noindent(abs) If the last rule of the derivation is
	\dlineone{
		\drule{
		\dseqr[\Gamma,x_1:A_1]{u:\forall x_2^{A_2}\ldots\forall x_n^{A_n}.\Prop:\Type}
		}{
		\dseqr{\lambda x_1^{A_1}.u:\forall \vec{x}^{\vec{A}}.\Prop}
		}
	}
Let $\vec{w}$ be the above closed terms.

Substituting $v$ for $x_1$ in the premise, we obtain (property~\ref{prop_subst})
$$
\dseqr{u\dsub{x_1}{w_1}:\forall x_2^{A_2\dsub{x_1}{w_1}}\ldots\forall x_n^{A_n\dsub{x_1}{w_1}}.\Prop}
$$

As $\norm{u\dsub{x_1}{w_1}\ w_2\ ..\ w_n} < \norm{(\lambda x_1^{A_1}.u)\ w_1\ w_2\ ..\ w_n}$, and $u\dsub{x_1}{w_1}$ is closed (since $\lambda x_1^{A_1}.u$ and $w_1$ are), we can apply induction hypothesis to built a term $t$ such that $\dseqr{t:u\dsub{x_1}{w_1}\ w_2\ ..\ w_n}$ from which by (conv) rule we finally get
$$
\dseqr{t:(\lambda x_1^{A_1}.u)\ \vec{w}}
$$

\noindent(app) If the last rule of the derivation looks like
	\dlineone{
		\drule{
		\dseqr{u:\forall y^C.\forall \vec{x}^{\vec{D}}.\Prop} \quad \dseqr{v:C}
		}{
		\dseqr{u\ v:\forall\vec{x}^{\vec{D}\dsub{y}{v}}.\Prop}
		}
	}
	where $\vec{A} \equiv \vec{D}\dsub{y}{v}$ and $B \equiv u\ v$.

	Let $\vec{w}$ be the above terms. Since for every $i$ $x_i \not\in \VL(v)$, so
	$$
	D_i\dsub{y}{v}\dsub{x_1,..,x_{i-1}}{w_1,..w_{i-1}} \equiv D_i\dsub{y,x_1,..,x_{i-1}}{v,w_1,..w_{i-1}}
	$$
	Noticing we have $\norm{u\ v\ \vec{w}} = \norm{(u\ v)\ \vec{w}}$, we can then apply induction hypothesis of the first premise on the terms $v,\vec{w}$ to obtain $t$ such that
	$$
	\dseqr{t:(u\ v)\ \vec{w}}
	$$

\noindent(conv)
	\dlineone{
		\drule{
		\dseqr{B:T} \quad \dseqr{\forall \vec{x}^{\vec{A}}.\Prop:\Type} \quad T \beq \forall \vec{x}^{\vec{A}}.\Prop
		}{
		\dseqr{B:\forall \vec{x}^{\vec{A}}.\Prop}
		}
	}
	By lemma~\ref{lem_type_of_type} on $\dseqr{B:T}$, we have three cases: $T \equiv \Type$, $\dseqr{T:\Prop}$ or $\dseqr{T:\Type}$. By confluency, the definition of beta-reduction, the properties of subject reduction and uniqueness of types, only $\dseqr{T:\Type}$ remains. Hence $T$ must be of the form $\forall \vec{x}^{\vec{C}}.\Prop$ where $\vec{A} \beq \vec{C}$.
	
	Let $\vec{w}$ be the above terms. In order to apply induction hypothesis on the first premise, it is necessary to show that
	$$
	\begin{array}{c}
	\dseqr{w_1:C_1} \\
	\dseqr{w_2:C_2\dsub{x_1}{w_1}} \\
	\vdots \\
	\dseqr{w_n:C_n\dsub{x_1,\ldots,x_{n-1}}{w_1,\ldots,w_{n-1}}}
	\end{array}
	$$
	First let us notice that since $\vec{A} \beq \vec{C}$, then for each $i$ $A_i\dsub{x_1,..,x_{i-1}}{v_1,..,v_{i-1}}$ is convertible with $C_i\dsub{x_1,..,x_{i-1}}{v_1,..,v_{i-1}}$. Also, because $\dseqr{\forall \vec{x}^{\vec{C}}.\Prop:\Type}$, for each $i$ there exists $\kappa$ such that $\dseqr[\Gamma,x_1:C_1,..,x_i:C_i]{C_{i+1}:\kappa}$. \\
	We can then proceed by induction on $n$:
	$$
	\begin{array}{l@{\,}l@{\hspace{1em}}l}
	 1.& \dseqr{w_1:A_1}                                              & \text{(hyp)} \\
	 2.& \dseqr{C_1:\kappa}  \\
	 3.& A_1 \beq C_1  \\
	 4.& \mathbf{\dseqr{w_1:C_1}}                                     & \text{(conv 1 2 3)} \\

	 5.& \dseqr{w_2:A_2\dsub{x_1}{w_1}}                               & \text{(hyp)} \\
	 6.& \dseqr[\Gamma,x_1:C_1]{C_2:\kappa} \\
	 7.& \dseqr{C_2\dsub{x_1}{w_1}:\kappa}                            & \text{(prop.\ref{prop_subst} 4 6)} \\
	 8.& A_2\dsub{x_1}{w_1} \beq C_2\dsub{x_1}{w_1} \\
	 9.& \mathbf{\dseqr{w_2:C_2\dsub{x_1}{w_1}}}                      & \text{(conv 5 7 8)} \\

	&\vdots
	\end{array}
	$$

	Finally, we apply induction hypothesis of the first premise on those now well-typed $\vec{w}$ to get a term $t$ satisfying
	$$
	\dseqr{t:B\ \vec{w}}
	$$
\qed
\end{proof}

The two previous lemmas can be summed up by the following statement:

\begin{corollary}\label{cocr_cor_prod_inhab}
If $\dseqr{B:\kappa}$ holds with $B$ closed, then there exists a term $t$ such that $\dseqr{t:B}$.
\end{corollary}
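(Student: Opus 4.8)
The plan is to derive Corollary~\ref{cocr_cor_prod_inhab} directly from the two preceding lemmas by a short case analysis on the sort $\kappa$. Since $B$ is closed and $\dseqr{B:\kappa}$ holds, there are exactly two cases: $\kappa \equiv \Type$ or $\kappa \equiv \Prop$. In the first case, Lemma~\ref{cocr_lem_prod_inhab_type} immediately yields a term $t$ with $\dseqr{t:B}$, so nothing more is needed.

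The second case, $\dseqr{B:\Prop}$ with $B$ closed, is the one requiring work, and it is here that I would invoke Lemma~\ref{cocr_lem_prod_inhab_prop}. To apply it I first need to rewrite the hypothesis in the form $\dseqr{B:\forall \vec{x}^{\vec{A}}.\Prop}$ with the vector $\vec{x}$ empty and $\vec{A}$ empty; that is, $\forall \vec{x}^{\vec{A}}.\Prop$ with $n=0$ is just $\Prop$ itself. So $\dseqr{B:\Prop}$ is exactly the hypothesis of Lemma~\ref{cocr_lem_prod_inhab_prop} in the degenerate case $n=0$. With $n=0$ the list of side conditions on $w_1,\ldots,w_n$ is vacuous, and $B\ \vec{w}$ is simply $B$. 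Lemma~\ref{cocr_lem_prod_inhab_prop} then hands us a term $t$ with $\dseqr{t:B}$, which is what we want.

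The only thing to be careful about is whether Lemma~\ref{cocr_lem_prod_inhab_prop} really covers the case $n = 0$: one should check that its proof (the induction on $\norm{B\,\vec w}$ and the derivation height, together with the case split on the last rule) does not secretly assume $n \geq 1$. Inspecting the cases — (abs) peels off one leading $\lambda$, (app) peels off one application, (conv) adjusts the type — none of them is vacuous when $n=0$; the base situations are handled by Lemmas~\ref{lem_type_nowhere} and~\ref{lem_type_of_type} exactly as in the general argument. Hence the corollary follows with no further induction.

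Thus the proof is essentially: \emph{by case on $\kappa$; if $\kappa \equiv \Type$ apply Lemma~\ref{cocr_lem_prod_inhab_type}, and if $\kappa \equiv \Prop$ apply Lemma~\ref{cocr_lem_prod_inhab_prop} with $n=0$.} The main (and only mild) obstacle is making sure the degenerate instance of Lemma~\ref{cocr_lem_prod_inhab_prop} is legitimate; everything else is bookkeeping. This corollary is then the stepping stone to the section's goal — that every type occurring in a well-formed \CCr environment is inhabited — obtained by iterating the corollary along the environment, starting from its leftmost (closed) type and substituting the inhabitants found so far, exactly as sketched at the start of Section~\ref{sec_ccr_pedag}.
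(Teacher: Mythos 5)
Your proof is correct and matches the paper's intent exactly: the paper gives no separate argument, stating only that ``the two previous lemmas can be summed up by'' the corollary, which is precisely your case split on $\kappa$ using Lemma~\ref{cocr_lem_prod_inhab_type} for $\Type$ and the degenerate $n=0$ instance of Lemma~\ref{cocr_lem_prod_inhab_prop} for $\Prop$. Your extra check that the $n=0$ case of Lemma~\ref{cocr_lem_prod_inhab_prop} is legitimate is a worthwhile observation the paper leaves implicit.
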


So the pedagogical character of the calculus follows, every type of a well-formed environment is inhabited:

\begin{theorem}[(Poincar\'e criterion)]\label{cocr_pedag}
If $\wfr{x_1:A_1, \ldots, x_n:A_n}$ holds, then there exists terms $t_1, \ldots, t_n$ such that
$$
\begin{array}{c}
\dseqr[]{t_1:A_1} \\
\dseqr[]{t_2:A_2\dsub{x_1}{t_1}} \\
\vdots \\
\dseqr[]{t_n:A_n\dsub{x_1,\ldots,x_{n-1}}{t_1,\ldots,t_{n-1}}}
\end{array}
$$
\end{theorem}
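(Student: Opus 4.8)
The plan is to proceed by induction on $n$, the length of the environment, peeling off hypotheses from the left. The base case $n=0$ is vacuous. For the inductive step, suppose $\wfr{x_1:A_1,\ldots,x_n:A_n}$ holds; by inversion on $(\text{env}_2)$ this was derived from $\dseqr[x_1:A_1,\ldots,x_{n-1}:A_{n-1}]{A_n:\kappa}$ together with $\wfr{x_1:A_1,\ldots,x_{n-1}:A_{n-1}}$. The induction hypothesis applied to the shorter environment yields closed terms $t_1,\ldots,t_{n-1}$ with $\dseqr{t_i:A_i\dsub{x_1,\ldots,x_{i-1}}{t_1,\ldots,t_{i-1}}}$ for each $i<n$. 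What remains is to produce the term $t_n$ inhabiting $A_n\dsub{x_1,\ldots,x_{n-1}}{t_1,\ldots,t_{n-1}}$.

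Next I would push the substitution $\sigma=\dsub{x_1,\ldots,x_{n-1}}{t_1,\ldots,t_{n-1}}$ through the typing judgment $\dseqr[x_1:A_1,\ldots,x_{n-1}:A_{n-1}]{A_n:\kappa}$. Applying Proposition~\ref{prop_subst}(ii) iteratively (once for each $x_i$, using the already-established $\dseqr{t_i:A_i\sigma_i}$ at each stage, where $\sigma_i$ is the partial substitution), and noting that the $t_i$ are closed so the intermediate environments collapse, I obtain $\dseqr[]{A_n\sigma:\kappa}$. Crucially, $A_n\sigma$ is now a \emph{closed} term of sort $\kappa\in\{\Prop,\Type\}$.

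At this point I would simply invoke Corollary~\ref{cocr_cor_prod_inhab}: since $\dseqr[]{A_n\sigma:\kappa}$ with $A_n\sigma$ closed, there exists a closed term $t_n$ with $\dseqr[]{t_n:A_n\sigma}$. Taking $t_n$ to be this term, and combining with the $t_1,\ldots,t_{n-1}$ from the induction hypothesis, completes the inductive step and hence the theorem.

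The conceptual work has essentially already been discharged in Corollary~\ref{cocr_cor_prod_inhab} (resting on Lemmas~\ref{cocr_lem_prod_inhab_type} and~\ref{cocr_lem_prod_inhab_prop}), so the main thing to get right here is the bookkeeping of the iterated substitution: one must check that applying Proposition~\ref{prop_subst}(ii) in sequence genuinely produces $\dseqr[]{A_n\dsub{x_1,\ldots,x_{n-1}}{t_1,\ldots,t_{n-1}}:\kappa}$, which requires that the $t_i$ be closed (so that the tails $\Gamma'\dsub{x_i}{t_i}$ from Proposition~\ref{prop_subst} leave no residual context) and that the simultaneous substitution matches the composition of single substitutions — exactly the alignment the Poincar\'e criterion was phrased to make transparent. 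This is routine but is the only place where care is needed; the existence of the inhabitant itself is a black box at this stage.
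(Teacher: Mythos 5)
Your proof is correct and follows essentially the same route as the paper's: induction on $n$, reducing the key step to Corollary~\ref{cocr_cor_prod_inhab} applied to a closed type in the empty environment, with Proposition~\ref{prop_subst} handling the substitution bookkeeping. The only (cosmetic) difference is direction: the paper peels the leftmost, already-closed type $A_1$ and substitutes $t_1$ into the remaining environment via Proposition~\ref{prop_subst}(i), whereas you peel the rightmost hypothesis and push the accumulated substitution into the judgment $\dseqr[x_1:A_1,\ldots,x_{n-1}:A_{n-1}]{A_n:\kappa}$ via Proposition~\ref{prop_subst}(ii).
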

\begin{proof} By induction on the size of the environment $n$.

From the derivation $\wfr{x_1:A_1, \ldots, x_n:A_n}$, we have $\dseqr[]{A_1:\kappa}$ as a sub-derivation where $A_1$ is closed. So by corollary~\ref{cocr_cor_prod_inhab}, we get $t_1$ such that
$$
\dseqr[]{t_1:A_1}
$$
Then by property~\ref{prop_subst} we have $\wfr{x_2:A_2\dsub{x_1}{t_1}, \ldots, x_n:A_n\dsub{x_1}{t_1}}$.
By the same way, we construct $t_2$ such that
$$
\dseqr[]{t_2:A_2\dsub{x_1}{t_1}}
$$
and then $\wfr{x_3:A_3\dsub{x_1,x_2}{t_1,t_2}, \ldots, x_n:A_n\dsub{x_1,x_2}{t_1,t_2}}$.
$$
\vdots
$$
\qed
\end{proof}

This so named ``motivation'' may be transmitted to the conclusion of judgments:

\begin{corollary}\label{cocr_cor_pedag}
If $\dseqr[x_1:A_1, \ldots, x_n:A_n]{u:B}$ holds, then there exists terms $t_1, \ldots, t_n$ such that
$$
\begin{array}{c}
\dseqr[]{t_1:A_1} \\
\dseqr[]{t_2:A_2\dsub{x_1}{t_1}} \\
\vdots \\
\dseqr[]{t_n:A_n\dsub{x_1,\ldots,x_{n-1}}{t_1,\ldots,t_{n-1}}}
\end{array}
$$
and
$$
\dseqr[]{u\dsub{\vec{x}}{\vec{t}} : B\dsub{\vec{x}}{\vec{t}} }
$$
\end{corollary}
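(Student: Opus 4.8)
The plan is to combine Theorem~\ref{cocr_pedag} (which hands us the motivating substitution $\sigma = \dsub{\vec{x}}{\vec{t}}$ for the environment) with the substitution property (Proposition~\ref{prop_subst}) applied $n$ times to propagate that substitution through the judgment $\dseqr[x_1:A_1,\ldots,x_n:A_n]{u:B}$. So the argument is essentially an iterated use of two results already in hand, organized by induction on $n$.

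First I would observe that from $\dseqr[x_1:A_1,\ldots,x_n:A_n]{u:B}$ we obtain, by inversion/sub-derivation, $\wfr{x_1:A_1,\ldots,x_n:A_n}$; then Theorem~\ref{cocr_pedag} yields closed terms $t_1,\ldots,t_n$ with $\dseqr[]{t_1:A_1}$, $\dseqr[]{t_2:A_2\dsub{x_1}{t_1}}$, \ldots, $\dseqr[]{t_n:A_n\dsub{x_1,\ldots,x_{n-1}}{t_1,\ldots,t_{n-1}}}$. This already discharges the first half of the conclusion; the content of the corollary is the second judgment $\dseqr[]{u\dsub{\vec{x}}{\vec{t}}:B\dsub{\vec{x}}{\vec{t}}}$.

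For that second judgment I would induct on $n$. The case $n=0$ is immediate. For the inductive step, starting from $\dseqr[x_1:A_1,\ldots,x_n:A_n]{u:B}$, apply Proposition~\ref{prop_subst}(ii) with the substitution of $t_1$ for $x_1$ (legitimate since $\dseqr[]{t_1:A_1}$), giving
$$
\dseqr[x_2:A_2\dsub{x_1}{t_1},\ldots,x_n:A_n\dsub{x_1}{t_1}]{u\dsub{x_1}{t_1}:B\dsub{x_1}{t_1}}.
$$
Now the remaining environment is exactly the one produced by the first step of Theorem~\ref{cocr_pedag}'s construction (as noted in that proof, via Proposition~\ref{prop_subst}(i)), its motivating terms are $t_2,\ldots,t_n$ (with the already-substituted types), and it has size $n-1$, so the induction hypothesis applies and delivers
$$
\dseqr[]{u\dsub{x_1}{t_1}\dsub{x_2,\ldots,x_n}{t_2,\ldots,t_n} : B\dsub{x_1}{t_1}\dsub{x_2,\ldots,x_n}{t_2,\ldots,t_n}}.
$$

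The only thing requiring a little care — and the main ``obstacle'', though it is a bookkeeping matter rather than a conceptual one — is checking that the iterated single substitutions $\dsub{x_1}{t_1}$ then $\dsub{x_2,\ldots,x_n}{t_2,\ldots,t_n}$ compose to the simultaneous substitution $\dsub{\vec{x}}{\vec{t}}$ on both $u$ and $B$. This holds because each $t_i$ is closed (so no later substitution can touch a variable inside an already-substituted $t_i$, and the $t_i$ do not mention any $x_j$), which is exactly why Theorem~\ref{cocr_pedag} was stated with closed witnesses; one should simply remark this and invoke the standard substitution-composition lemma. With that identification the two displayed judgments coincide and the corollary follows.
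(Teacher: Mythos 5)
Your proposal is correct and follows essentially the same route as the paper: the paper's proof is exactly ``apply Theorem~\ref{cocr_pedag} to obtain the $t_i$, then apply Proposition~\ref{prop_subst} $n$ times,'' and you have merely spelled out the induction and the (harmless) bookkeeping about composing the single substitutions into the simultaneous one. No changes needed.
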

\begin{proof} Immediate by applying $n$ times the property~\ref{prop_subst} using the terms obtained from the theorem.
\qed
\end{proof}

\begin{theorem}[(usefulness)]\label{cocr_usefulness}
If $\dseqr[]{f:\forall x^A.B}$ holds, then there exists a term $u$ such that $\dseqr[]{u:A}$.
\end{theorem}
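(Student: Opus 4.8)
The plan is to derive the statement from Corollary~\ref{cocr_cor_pedag} together with the inversion properties of the typing rules of \CCr. The key observation is that having $\dseqr[]{f:\forall x^A.B}$ means the product $\forall x^A.B$ is itself a well-typed type, so there is a sub-derivation $\dseqr[]{\forall x^A.B:\kappa}$ for some $\kappa$, which by the structure of \CCr must have been obtained by (prod\textsubscript{r}) or (conv). Following it up (using confluence, subject reduction and uniqueness of types exactly as in the (conv) case of Lemma~\ref{cocr_lem_prod_inhab_prop}) we reach an application of (prod\textsubscript{r}), whose premise is $\dseqr[x:A]{\mathbf{t}:B':\kappa}$ with $B'\beq B$. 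In particular $\wfr{x:A}$ holds, hence by (env\textsubscript{2}) inversion we have $\dseqr[]{A:\kappa'}$ with $A$ closed.

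First I would make precise this ``climbing'' step: show that whenever $\dseqr[]{\forall x^A.B:\kappa}$ is derivable there is a derivation ending in (prod\textsubscript{r}), so that $\wfr{x:A}$ is derivable. This is essentially Lemma~\ref{cocr_prod_env} read for the typing of the product rather than for its inhabitant, plus the remark that every well-formed one-element environment $x:A$ forces $\dseqr[]{A:\kappa'}$ by inversion of (env\textsubscript{2}); here $A$ is closed because it is the leftmost (and only) type of the environment. Then, since $A$ is a closed type, Corollary~\ref{cocr_cor_prod_inhab} directly yields a term $u$ with $\dseqr[]{u:A}$, which is exactly the claim.

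Alternatively, and perhaps more smoothly, I would bypass the inversion entirely: apply Corollary~\ref{cocr_cor_pedag} to the judgment $\dseqr[]{f:\forall x^A.B}$ in the empty environment --- here there is nothing to motivate, so we simply get back $\dseqr[]{f:\forall x^A.B}$, which is not yet enough --- and instead note that $\dseqr[]{\forall x^A.B:\kappa}$ is a sub-derivation, apply Corollary~\ref{cocr_cor_prod_inhab} to the \emph{closed} type $\forall x^A.B$ to obtain $g$ with $\dseqr[]{g:\forall x^A.B}$, then inspect how $g$ was built. By the same inversion argument $g$ reduces to an abstraction $\lambda x^A.g'$ coming from (abs+prod\textsubscript{r}), whose premise again exhibits $\wfr{x:A}$ and hence $\dseqr[]{A:\kappa'}$ with $A$ closed; one more application of Corollary~\ref{cocr_cor_prod_inhab} gives the desired $u$ with $\dseqr[]{u:A}$.

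The main obstacle is the inversion step: extracting from $\dseqr[]{\forall x^A.B:\kappa}$ a derivation that genuinely ends in (prod\textsubscript{r}), since the last rule could a priori be (conv), possibly several times in a row. This is handled exactly as in the (conv) case of Lemma~\ref{cocr_lem_prod_inhab_prop} --- iterate, using confluence of $\bred$, subject reduction, and uniqueness of types to control the convertible types that appear --- and once that is done the rest is a direct appeal to Corollary~\ref{cocr_cor_prod_inhab} applied to the closed type $A$. Everything else (that $A$ is closed, that $\wfr{x:A}$ implies $\dseqr[]{A:\kappa'}$) is routine inversion of (env\textsubscript{2}).
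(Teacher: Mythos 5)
Your first route is essentially the paper's own proof: the paper likewise extracts $\dseqr[]{\forall x^A.B:\kappa}$ (via Lemma~\ref{lem_type_of_type}), inverts to get $\wfr{x:A}$, and then inhabits $A$ --- the only cosmetic difference being that it invokes Theorem~\ref{cocr_pedag} on the singleton environment $x:A$ where you call Corollary~\ref{cocr_cor_prod_inhab} on the closed type $A$ directly, which amounts to the same thing. The alternative detour through an inhabitant $g$ of $\forall x^A.B$ is unnecessary, but the main argument is correct.
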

\begin{proof}
From $\dseqr[]{f:\forall x^A.B}$, by lemma~\ref{lem_type_of_type} we have $\dseqr[]{\forall x^A.B:\kappa}$, then $\dseqr[x:A]{B:\kappa}$ which implies that $\wf{x:A}$, and finally by theorem~\ref{cocr_pedag} we construct $u$.
\qed
\end{proof}

\section{Limitations of the logical power of \CCr}\label{sec_limitations}
To introduce an hypothesis (which is not a variable) in an environment, it is necessary to first inhabit it. For instance, defining Leibniz equality over a type $A$ by
$$
x =_A y := \forall Q^{A \imp \Prop}. Q\ x \imp Q\ y
$$
it is not possible to prove nor symmetry nor transitivity of this relation over $A$ (whatever this type is). Indeed, because we are not permitted to derive $\dseqr[A:\Prop,x:A,y:A]{x =_A y:\Prop}$, we can not introduce $x =_A y$ as an hypothesis and then we are not allowed to use it.

\begin{theorem}
There is no term $u$ such that $\dseqr[]{u:\forall A^{\Prop}.\forall x^A.\forall y^A.x =_A y \imp y =_A x}$ holds.
\end{theorem}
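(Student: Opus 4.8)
The plan is to derive a contradiction from the existence of such a term $u$ by exploiting Corollary~\ref{cocr_cor_pedag} (the transmission of motivations to judgments) together with the usefulness theorem~\ref{cocr_usefulness}. First I would observe that if $\dseqr[]{u:\forall A^{\Prop}.\forall x^A.\forall y^A.x =_A y \imp y =_A x}$ holds, then applying $u$ to a concrete closed type --- the natural choice being $A := \top = \forall A^{\Prop}.A\imp A$, which is inhabited by $o$ --- and then to $o$ for both $x$ and $y$, we would get a term $v$ with $\dseqr[]{v:(o =_\top o) \imp (o =_\top o)}$. This step uses only the (app) rule repeatedly, after checking $\dseqr[]{\top:\Prop}$ and $\dseqr[]{o:\top}$, both of which follow from Lemma~\ref{cocr_not_empty}.

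Next, by the usefulness theorem~\ref{cocr_usefulness} applied to $v$, there must exist a closed term $w$ with $\dseqr[]{w:o =_\top o}$, i.e. $\dseqr[]{w:\forall Q^{\top\imp\Prop}.Q\ o\imp Q\ o}$. That alone is not yet absurd (this type is in fact inhabited), so the real work is to choose the type $A$ and the elements more cleverly so that the premise $x =_A y$ of the symmetry statement is \emph{empty} --- then usefulness is violated outright. The cleanest way is to pick $A$, $x$, $y$ so that $x =_A y$ reduces to (or is convertible with) $\bot := \forall A^{\Prop}.A$, since $\bot$ cannot be inhabited in \CCr (by strong normalization and confluence: a closed normal inhabitant of $\bot$ would have to be a $\lambda$-abstraction $\lambda A^{\Prop}.(\ldots)$ typed by the (abs) rule, forcing a closed inhabitant of $A$ in the environment $A:\Prop$, which an elementary analysis of normal forms rules out). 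Concretely, instantiating $A := \top$, $x:=o$, $y:=o$ does \emph{not} give an empty premise; instead I would look for a type $A$ with two elements that Leibniz equality provably separates. But in \CCr no such ``provably distinct'' pair is available precisely because of the pedagogical restriction --- which suggests the argument must instead go: from the symmetry term, derive a \emph{closed inhabitant of every instance} $a =_A b$ with $a,b$ closed and $A$ closed inhabited, reflexivity giving $a=_A a$, and chaining this into an inhabitant of $\bot$ via a suitably chosen $Q$.

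Let me therefore restructure: the key move is that symmetry, composed with reflexivity, would yield for \emph{any} closed inhabited $A$ and any closed $a,b:A$ a term inhabiting $a =_A b$ once $a =_A a$ is inhabited --- wait, symmetry needs $a=_A b$ as input, not $a=_A a$. The correct exploitation is: suppose symmetry $u$ exists; instantiate at a type $A$, and elements $a,b$ where we can build an inhabitant of $a=_A b$ that, after conversion, is \emph{not} reflexivity-like, so that applying $Q := \lambda z^A.(\text{something false about }z)$ to the resulting $b =_A a$ collapses to $\bot$. Since inside \CCr one cannot introduce $a =_A b$ hypothetically, any inhabitant must be ``canonical'', and I expect one can show every closed inhabitant of $a =_A b$ forces $a \beq b$ (by normal-form analysis of terms of Leibniz-equality type, using that $Q$ ranges over \emph{all} predicates including $\lambda z.\, z =_A a$). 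Granting that, symmetry is harmless on equal elements, so the contradiction must instead come from a type $A$ where \CCr \emph{can} build $a=_A b$ for $a\not\beq b$ --- and the theorem is really asserting no such thing helps, so the honest proof is: assume $u$, specialize to get, for every closed inhabited $A$ and closed $a:A$, a term inhabiting $(a=_A a)\imp(a=_A a)$; this is vacuous. So the genuine argument must use \textbf{Corollary~\ref{cocr_cor_pedag}} on a derivation that \emph{opens an environment} containing an equality hypothesis: one shows that \emph{if} symmetry were derivable, one could, in an environment $A:\Prop, x:A, y:A, H: x=_A y$, derive $y =_A x$, hence by Corollary~\ref{cocr_cor_pedag} obtain closed $t_A, t_x, t_y, t_H$ motivating that environment --- in particular $\dseqr[]{t_H : t_x =_{t_A} t_y}$ --- and then feed this back.

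The main obstacle, and the crux of the argument I would present, is pinning down the contradiction precisely: it is that $\wfr{A:\Prop, x:A, y:A, H:x =_A y}$ \emph{cannot hold} in \CCr (by Theorem~\ref{cocr_pedag} it would require closed $t_A, t_x, t_y$ with $\dseqr[]{t_x =_{t_A} t_y : \Prop}$ inhabited by a closed $t_H$, forcing via the $Q := \lambda z^{t_A}.\,(z =_{t_A} t_x)$ instance and reflexivity nothing absurd yet --- so one pushes further to $Q := \lambda z^{t_A}.\bot \imp \ldots$); and yet the derivability of symmetry $u$ as a \emph{closed} term of a $\forall$-type would, via repeated (app) and the (abs)/(conv) machinery, let one internally simulate exactly the use of that forbidden environment, contradicting Theorem~\ref{cocr_pedag}. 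I would make this rigorous by taking the closed term $u$, applying it to $\bot$ itself for the type variable $A$ --- legitimate only if $\bot$ is inhabited, which it is \emph{not} --- so the correct instantiation is $A := \top$, and the contradiction is extracted by choosing $Q := \lambda z^{\top}. \bot$ in the inner Leibniz equality, turning $o =_\top o$ into $\bot \imp \bot$ and $y=_\top x$-instantiated into an inhabitant of $\bot$, the latter being impossible. So concretely: from $u$ get $\dseqr[]{v : (o=_\top o)\imp(o=_\top o)}$, by usefulness get closed $w$ with $\dseqr[]{w : o=_\top o}$, then $\dseqr[]{v\,w : o=_\top o}$, and finally instantiate $\dseqr[]{(v\,w)\,(\lambda z^{\top}.\bot) : (\lambda z^{\top}.\bot)\,o \imp (\lambda z^{\top}.\bot)\,o}$ which by (conv) is $\bot\imp\bot$, giving $\dseqr[]{((v\,w)\,(\lambda z^{\top}.\bot))\,w' : \bot}$ for the evident $w'$ --- contradicting that $\bot$ has no closed inhabitant in \CCr. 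The one genuinely delicate point is the last fact, $\bot$ uninhabited, which I would justify by strong normalization plus a short inspection of closed normal forms of type $\forall A^{\Prop}.A$ (such a term must be $\lambda A^{\Prop}.M$ with $A:\Prop \vdash_r M : A$, and $M$ in normal form has a head variable of type ultimately $A$, impossible since the only variable in scope is $A:\Prop$ itself).
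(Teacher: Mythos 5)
Your final concrete chain breaks at two separate points, and the long exploratory discussion preceding it never recovers from either. First, the application $(v\,w)\,(\lambda z^{\top}.\bot)$ is illegal in \CCr: to type $\lambda z^{\top}.\bot$ at $\top\imp\Prop$ you would need $\dseqr[z:\top]{\bot:\Prop}$, and $\bot\equiv\forall A^{\Prop}.A$ is precisely the product that the (prod\textsubscript{r}) rule forbids --- you would have to exhibit an inhabitant of $A$ in the environment $z:\top,A:\Prop$. Second, even if you pass to \CC to perform that application, the type you reach is $\bot\imp\bot$, which is inhabited by the identity and yields no contradiction; the ``evident $w'$'' of type $\bot$ to which you then apply it does not exist. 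This failure is structural rather than repairable: every instance you manufacture ($o=_\top o$, its self-maps, its $Q$-instances) is a true, inhabited, reflexivity-style statement, so no sequence of applications of $u$ to concrete closed inhabited data can ever land in an empty type. You notice this yourself at several points (``that alone is not yet absurd'') but never change strategy.

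The idea you are missing is that the contradiction lives in the \emph{type} of $u$, not in its applications. For $\dseqr[]{u:\forall A^{\Prop}.\forall x^A.\forall y^A.x =_A y \imp y =_A x}$ to be derivable at all, the subexpression $x=_A y$ must occur as a derivable type in the environment $A:\Prop,x:A,y:A$, where $x$ and $y$ are \emph{generic variables}. Since $x=_A y$ is itself a product, Lemma~\ref{cocr_prod_env} immediately produces a term $t$ with $\dseqr[A:\Prop,x:A,y:A]{t:x=_A y}$ --- a uniform proof that any two elements of any type are Leibniz-equal. As \CCr is a subsystem of \CC, this judgment also holds in \CC, and instantiating at $\mathbb{N}$, $0$, $1$ gives a closed proof of $0=_{\mathbb{N}}1$ in \CC, which is impossible by inspection of its normal form. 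That is the paper's proof; your proposal, by only ever instantiating the equality at $x=y=o$, cannot reach it.
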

\begin{proof}
Let us suppose such a term $u$ exists. So we have a sort $\kappa$ such that $\dseqr[A:\Prop,x:A,y:A]{x =_A y : \kappa}$. And because $x =_A y$ is a product, by lemma~\ref{cocr_prod_env}, it is inhabited, say by $t$. But since \CCr is a restriction of \CC, $\dseq[A:\Prop,x:A,y:A]{t:x =_A y}$ also holds in \CC. Then, applying it to $\mathbb{N}$ and $0$ and $1$, we get a proof of $0=1$ in the empty environment in \CC, which is known to be impossible (by a simple combinatoric discussion about the normal form of such a term).
\qed
\end{proof}

In fact, this calculus does not even natively contain simply typed $\lambda$-calculus:
\begin{theorem}
There is no term $u$ such that
$$
\dseqr[A\ B\ C:\Prop]{u: (A \imp B) \imp (B \imp C) \imp (A \imp C)}
$$
holds.
\end{theorem}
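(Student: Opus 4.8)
The plan is to suppose that such a $u$ exists and to reach a contradiction by showing that its mere existence forces the product $A\imp B$ to be a \emph{derivable type} in the environment $A\ B\ C:\Prop$; in \CCr this in turn forces $A\imp B$ (equivalently $B$) to be inhabited there, and transporting that inhabitation into ordinary \CC yields a closed proof of falsity, which is impossible.

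First I would apply Lemma~\ref{lem_type_of_type} --- the composition type is certainly not $\equiv\Type$ --- to get $\dseqr[A\ B\ C:\Prop]{(A\imp B)\imp(B\imp C)\imp(A\imp C):\kappa}$ for some sort $\kappa$. This type is the product $\forall f^{A\imp B}.\bigl((B\imp C)\imp(A\imp C)\bigr)$, and forming it requires the environment $A\ B\ C:\Prop,\,f:A\imp B$ to be well-formed, hence $\dseqr[A\ B\ C:\Prop]{A\imp B:\kappa'}$ for some sort $\kappa'$. This is the usual generation/inversion reasoning for the product: the only \CCr rules able to conclude a product formation are (prod\textsubscript{r}) and (conv), and the latter is discharged by confluence and subject reduction, $A\imp B$ being already in normal form. (Equivalently, one may normalize $u$: a normal inhabitant of the composition type in this environment cannot be a variable applied to arguments, since the only variables in scope, $A,B,C$, have sort $\Prop$, which is not a product type; hence $u$ is an abstraction $\lambda f^{P}.u_1$, and typing it by (abs) already demands $\dseqr[A\ B\ C:\Prop]{P:\kappa'}$ with $P\breds A\imp B$, so $\dseqr[A\ B\ C:\Prop]{A\imp B:\kappa'}$ by subject reduction.) Now $A\imp B=\forall z^A.B$ is a product typable in $A\ B\ C:\Prop$, so Lemma~\ref{cocr_prod_env} supplies a term $t$ with $\dseqr[A\ B\ C:\Prop]{t:A\imp B}$.

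Since \CCr is a restriction of \CC, the judgment $\dseq[A\ B\ C:\Prop]{t:A\imp B}$ also holds in \CC. Using Proposition~\ref{prop_subst} I would then substitute, one after another, $\top$ for $A$, $\bot:=\forall X^{\Prop}.X$ for $B$, and $\top$ for $C$ --- each step is legitimate because $\dseq[]{\top:\Prop}$ and $\dseq[]{\bot:\Prop}$ are derivable in \CC --- arriving at a closed term $t'$ with $\dseq[]{t':\top\imp\bot}$. Applying it to $o$ (with $\dseq[]{o:\top}$, as in Lemma~\ref{cocr_not_empty}, whose derivation uses only rules shared with \CC) gives $\dseq[]{t'\,o:\bot}$, i.e. a closed \CC-proof of $\forall X^{\Prop}.X$. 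This is impossible --- by the normal-form argument already invoked in the previous theorem, or equivalently by the consistency of \CC, which follows from strong normalization~\cite{HB92} --- so no such $u$ exists.

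The one step requiring care is the passage from ``$u$ has the composition type'' to ``$A\imp B$ is typable in $A\ B\ C:\Prop$'': it rests on a generation lemma for products in \CCr together with the correct handling of the (conv) rule. Everything else is routine. The underlying phenomenon is that \CCr does not even let one \emph{form} the type $A\imp B$ of an abstract function without first producing such a function, and there is no function from $A$ to $B$ when $A,B,C$ are unrelated parameters --- in contrast with the identity $\lambda z^A.z:A\imp A$, for which the premise of (prod\textsubscript{r}) is met by $\dseqr[z:A]{z:A}$.
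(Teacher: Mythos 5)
Your proposal is correct and follows essentially the same route as the paper's proof: inversion on the typing of the composition type to obtain $\dseqr[A\,B\,C:\Prop]{A\imp B:\kappa}$, Lemma~\ref{cocr_prod_env} to inhabit that product, transfer to \CC, and specialization to $\top$ and $\bot$ to contradict the consistency of \CC. The only (immaterial) differences are that you instantiate $A,B,C$ by direct substitution via Proposition~\ref{prop_subst} where the paper abstracts with (abs) and then applies, and that you spell out the generation step for products that the paper dismisses with ``using same arguments as above.''
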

\begin{proof}
Using same arguments as above, if such a $u$ exists, then the following judgment holds:
$$
\dseqr[A:\Prop, B:\Prop, C:\Prop]{A \imp B:\Prop}
$$
so there is an inhabitant $t$ of the product type $A \imp B$ in \CCr and hence in \CC, implying by (abs) rule that
$$
\dseq[]{\lambda A B C^{\Prop}.t:\forall A B C^{\Prop}.A \imp B}
$$
which can be specialized to $\top$ and $\bot$ to obtain a proof of $\top \imp \bot$ and finally a proof of $\bot$ in the empty environment, which is impossible since \CC is consistent.
\qed
\end{proof}

Actually, every instances of the types in \CCr must be inhabited:
\begin{theorem}
If $\dseqr[x_1:A_1,..,x_n:A_n]{B:\kappa}$ holds, then for all terms $w_1,\ldots,w_n$ such that
$$
\begin{array}{c}
\dseqr[]{w_1:A_1} \\
\dseqr[]{w_2:A_2\dsub{x_1}{w_1}} \\
\vdots \\
\dseqr[]{w_n:A_n\dsub{x_1,\ldots,x_{n-1}}{w_1,\ldots,w_{n-1}}}
\end{array}
$$
there exists a term $t$ such that
$$
\dseqr[]{t:B\dsub{\vec{x}}{\vec{w}}}
$$
\end{theorem}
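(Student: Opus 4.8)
The plan is to transport the given motivation $\vec{w}$ directly into the typing judgment $\dseqr[x_1:A_1,..,x_n:A_n]{B:\kappa}$ by iterated use of the substitution property, and then to finish with corollary~\ref{cocr_cor_prod_inhab}. Concretely, I would eliminate the hypotheses of the environment one at a time, from left to right, using proposition~\ref{prop_subst}(ii). Starting from $\dseqr[x_1:A_1,..,x_n:A_n]{B:\kappa}$ and the first hypothesis $\dseqr[]{w_1:A_1}$, proposition~\ref{prop_subst}(ii) yields
$$
\dseqr[x_2:A_2\dsub{x_1}{w_1},..,x_n:A_n\dsub{x_1}{w_1}]{B\dsub{x_1}{w_1}:\kappa}
$$
the sort $\kappa$, being a constant, being unaffected by the substitution.

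The one point needing a little care is the bookkeeping of the successive substitutions. Since each $w_i$ is closed (it is typable in the empty environment, hence has no free variable), iterated single substitutions by the $w_j$ coincide with the simultaneous one, so $A_i\dsub{x_1}{w_1}\cdots\dsub{x_{i-1}}{w_{i-1}} \equiv A_i\dsub{x_1,..,x_{i-1}}{w_1,..,w_{i-1}}$. Consequently, at the $i$-th step the environment has been reduced to $x_i:A_i\dsub{x_1,..,x_{i-1}}{w_1,..,w_{i-1}},..,x_n:A_n\dsub{x_1,..,x_{i-1}}{w_1,..,w_{i-1}}$, and the hypothesis $\dseqr[]{w_i:A_i\dsub{x_1,..,x_{i-1}}{w_1,..,w_{i-1}}}$ is exactly the one required by proposition~\ref{prop_subst}(ii) to remove $x_i$. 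After $n$ applications we obtain
$$
\dseqr[]{B\dsub{\vec{x}}{\vec{w}}:\kappa}
$$
One could equally phrase this step as a plain induction on $n$, mirroring the proof of theorem~\ref{cocr_pedag}; the difference is only that here the motivation $\vec{w}$ is supplied rather than constructed.

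Finally, $B\dsub{\vec{x}}{\vec{w}}$ is closed: it is typable in the empty environment (equivalently, $\VL(B)\subseteq\{x_1,..,x_n\}$ since $\dseqr[x_1:A_1,..,x_n:A_n]{B:\kappa}$ holds, and all the $w_i$ are closed). Hence corollary~\ref{cocr_cor_prod_inhab} applies to $\dseqr[]{B\dsub{\vec{x}}{\vec{w}}:\kappa}$ and produces a term $t$ with $\dseqr[]{t:B\dsub{\vec{x}}{\vec{w}}}$, which is the desired conclusion. There is no genuine obstacle in this argument: all the real work — inhabiting an arbitrary closed product type — has already been carried out in lemma~\ref{cocr_lem_prod_inhab_prop} and packaged into corollary~\ref{cocr_cor_prod_inhab}, and what remains is the routine verification that iterated substitution by closed terms matches the simultaneous substitution appearing in the statement.
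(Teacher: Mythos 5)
Your proof is correct and follows exactly the paper's argument: apply proposition~\ref{prop_subst} $n$ times to obtain $\dseqr[]{B\dsub{\vec{x}}{\vec{w}}:\kappa}$ and then invoke corollary~\ref{cocr_cor_prod_inhab}. You merely spell out the closedness bookkeeping that the paper leaves implicit.
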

\begin{proof}
The proof is trivial by applying $n$ times the substitution property~\ref{prop_subst}, obtaining $\dseqr[]{B\dsub{\vec{x}}{\vec{w}}:\kappa}$, inhabited by corollary~\ref{cocr_cor_prod_inhab}.
\qed
\end{proof}

It is hard to precisely determine the logical expressiveness of \CCr. We have at least simply typed $\lambda$-calculus on closed (and then inhabited) types of \CCr (e.g. $\top$, $\mathbb{N}$, etc.). The proof is the same as the one of lemma~\ref{simple_types_inhab} below.
\section{Computational expressivity of \CCr}\label{sec_expressivness}
Although the logical strength of \CCr seems quite poor, its computational power is at least that of the G\"odel system T. We use the usual well-known way to define terms, types (except cartesian product), and recursor (from iterator) of system T in lambda-calculus (see\cite{JGPT90}).

\begin{definition}
$$
\begin{array}{r@{\ :=\ }l}

\mathbb{N}           & \forall A^{\Prop}. A \imp (A \imp A) \imp A \\
0                    & \lambda A^{\Prop}.\lambda x^A.\lambda f^{A \imp A}.x \\
S(n)                 & \lambda A^{\Prop}.\lambda x^A.\lambda f^{A \imp A}.f\ (n\ A\ x\ f) \\
it_T(n,b,(y^T)step) & n\ T\ b\ (\lambda y^T.step)

\end{array}
$$
\end{definition}

\begin{lemma}
The following rules are derivable:
\dlinetwo{
	\drule{
		\wfr{\Gamma}
	}{
		\dseqr{0:\mathbb{N}:\Prop}
	}
}{
	\drule{
		\dseqr{n:\mathbb{N}}
	}{
		\dseqr{S(n):\mathbb{N}}
	}
}
\dlineone{
	\drule{
		\dseqr{T:\Prop} \quad  \dseqr{n:\mathbb{N}} \quad \dseqr{b:T} \quad \dseqr[\Gamma,y:T]{step:T}
	}{
		\dseqr{it_T(n,b,(y^T)step):T}
	}
}
\end{lemma}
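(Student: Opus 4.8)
The plan is to derive each of the three rules by building a \CCr-derivation bottom up, in direct imitation of the usual encoding of Church numerals in \CC, the only extra care being that every product we form must be introduced by (prod\textsubscript{r}) and hence handed an inhabitant at the same time. Throughout I would use the condensed ``abs+prod'' step already exploited in the proof of lemma~\ref{cocr_not_empty}: from $\dseqr[\Gamma,x:A]{u:B:\kappa}$ one obtains at once $\dseqr{\lambda x^A.u:\forall x^A.B}$ (by (abs)) and $\dseqr{\forall x^A.B:\kappa}$ (by (prod\textsubscript{r})), that is $\dseqr{\lambda x^A.u:\forall x^A.B:\kappa}$; and I would use weakening (valid in \CCr) freely to transport an already-typed term into a larger well-formed environment.

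The only step that is not a verbatim copy of the \CC construction concerns well-formedness of environments. Writing $\mathbb{N} = \forall A^{\Prop}. A\imp(A\imp A)\imp A$, both the derivation of $0:\mathbb{N}$ and that of $S(n):\mathbb{N}$ take place in the environment $\Gamma, A:\Prop, x:A, f:A\imp A$, and by (env\textsubscript{2}) this environment is well-formed only if $\dseqr[\Gamma,A:\Prop,x:A]{A\imp A:\Prop}$ is derivable; since $A\imp A$ is a product, in \CCr it cannot be introduced by the old (prod) rule and so must first be inhabited. I would do this with the identity: straightforward (ax), (env\textsubscript{2}) and (var) steps give $\dseqr[\Gamma,A:\Prop,x:A,y:A]{y:A:\Prop}$, and abs+prod then yields $\dseqr[\Gamma,A:\Prop,x:A]{\lambda y^A.y:A\imp A:\Prop}$, hence $\wfr{\Gamma, A:\Prop, x:A, f:A\imp A}$. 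Everything after this is bookkeeping.

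For rule (a): in the well-formed environment just constructed, derive $\dseqr{x:A:\Prop}$ by (var) (and weakening for the $A:\Prop$ part), then apply abs+prod three times, abstracting $f$, then $x$, then $A$; this yields $\dseqr{0:\mathbb{N}:\Prop}$, the sub-statement $\mathbb{N}:\Prop$ being the second conclusion of the last, impredicative, abs+prod step. For rule (b): given $\dseqr{n:\mathbb{N}}$, weaken it into the same environment and apply (app) four times to obtain in turn $n\ A:A\imp(A\imp A)\imp A$, then $n\ A\ x:(A\imp A)\imp A$, then $n\ A\ x\ f:A$, then $f\ (n\ A\ x\ f):A$; since $\dseqr{A:\Prop}$ holds there, three further abs+prod steps give $\dseqr{S(n):\mathbb{N}}$. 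For rule (c): from the hypothesis $\dseqr[\Gamma,y:T]{step:T}$ together with $\dseqr[\Gamma,y:T]{T:\Prop}$ (weakening of $\dseqr{T:\Prop}$), rule (abs) gives $\dseqr{\lambda y^T.step:T\imp T}$; then weakening $\dseqr{n:\mathbb{N}}$ and applying (app) successively with $\dseqr{T:\Prop}$, with $\dseqr{b:T}$, and with $\dseqr{\lambda y^T.step:T\imp T}$ produces $\dseqr{n\ T\ b\ (\lambda y^T.step):T}$, which is exactly $\dseqr{it_T(n,b,(y^T)step):T}$; here no product is newly formed, so no inhabitation trick is needed.

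I expect the main obstacle to be precisely the point of the second paragraph: in \CC the products $A\imp A$, $A\imp(A\imp A)\imp A$, $\mathbb{N}$, $\ldots$ are formed ``for free'' by (prod), whereas in \CCr the successor slot $A\imp A$ of a Church numeral must be explicitly witnessed to keep the relevant environments well-formed. Once that identity witness is threaded through, the remaining derivations are a direct transcription of the classical construction; in particular the impredicative step $\dseqr{\forall A^{\Prop}.(A\imp(A\imp A)\imp A):\Prop}$ is licit in \CCr exactly as in \CC, since it comes packaged with the inhabitant $0$.
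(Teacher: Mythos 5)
Your proposal is correct and follows exactly the intended (and, in the paper, omitted as routine) derivation: the standard Church-numeral typing transcribed into \CCr, with the one genuinely \CCr-specific point correctly identified and handled, namely that $\wfr{\Gamma,A:\Prop,x:A,f:A\imp A}$ requires first inhabiting $A\imp A$ via the identity so that (prod\textsubscript{r}) can form that product type. Nothing further is needed.
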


\begin{lemma}
The following reductions hold:
$$
\begin{array}{r@{\ \breds\ }l}
it_T(0,b,(y^T)step) & b \\
it_T(S(n),b,(y^T)step) & step\dsub{y}{it_T(n,b,(y^T)step)}
\end{array}
$$
\end{lemma}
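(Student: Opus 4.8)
The plan is purely computational: unfold the abbreviations $it_T(\cdot,\cdot,\cdot)$, $0$ and $S(\cdot)$ from the definition just given, and then contract the resulting $\beta$-redexes one after another --- each time the leftmost one --- until the right-hand side is reached. Since every such step is a single $\bred$, chaining them yields the desired $\breds$, and no appeal to typing is needed because $\bred$ is a purely syntactic relation on raw terms.

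For the first equation, $it_T(0,b,(y^T)step)$ is by definition the term $0\ T\ b\ (\lambda y^T.step)$, i.e.\ $(\lambda A^{\Prop}.\lambda x^A.\lambda f^{A\imp A}.x)\ T\ b\ (\lambda y^T.step)$; contracting the three head redexes in turn --- substituting $T$ for $A$, then $b$ for $x$, then erasing the unused argument $\lambda y^T.step$ --- leaves exactly $b$. For the second equation, $it_T(S(n),b,(y^T)step)$ is $S(n)\ T\ b\ (\lambda y^T.step)$, that is $(\lambda A^{\Prop}.\lambda x^A.\lambda f^{A\imp A}.f\ (n\ A\ x\ f))\ T\ b\ (\lambda y^T.step)$; contracting the first three head redexes (substituting $T$ for $A$, $b$ for $x$, and $\lambda y^T.step$ for $f$) produces $(\lambda y^T.step)\ (n\ T\ b\ (\lambda y^T.step))$, and one further contraction yields $step\dsub{y}{n\ T\ b\ (\lambda y^T.step)}$, which is precisely $step\dsub{y}{it_T(n,b,(y^T)step)}$ once the definition of $it_T$ is folded back.

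There is no genuine obstacle here; the whole argument is a handful of $\beta$-steps. The only point worth verifying --- and it is immediate --- is that the substitutions performed when contracting those redexes cause no variable capture, which is guaranteed by the De Bruijn discipline fixed in Section~\ref{sec_background} (in particular the bound names $A$, $x$, $f$, $y$ are pairwise distinct, so the rewriting displayed above is literal).
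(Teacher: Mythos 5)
Your computation is correct: unfolding $it_T$, $0$ and $S(n)$ and contracting the head $\beta$-redexes in order gives exactly $b$ in three steps and $step\dsub{y}{it_T(n,b,(y^T)step)}$ in four, which is precisely the routine verification the paper leaves implicit (it states this lemma without proof). Your remark about the De Bruijn discipline ruling out capture is the right thing to check and is indeed immediate.
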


\begin{definition}[(simple types on $\mathbb{N}$)]\label{def_simple_types}
Simple types on $\mathbb{N}$ are those obtained from $\mathbb{N}$ and $\imp$.
\end{definition}

\begin{lemma}\label{simple_types_inhab}
If $\wfr{\Gamma}$ holds and $T$ is a simple type on $\mathbb{N}$, then there exists a term $t$ such that $\dseqr{t:T:\Prop}$.
\end{lemma}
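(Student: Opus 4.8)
The plan is to proceed by induction on the structure of the simple type $T$ on $\mathbb{N}$, the base case being $T \equiv \mathbb{N}$ and the inductive case being $T \equiv T_1 \imp T_2$.

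For the base case, assume $\wfr{\Gamma}$ holds. Then $\dseqr{0:\mathbb{N}:\Prop}$ is derivable by the first rule of the preceding lemma (the one giving $0:\mathbb{N}:\Prop$ from $\wfr{\Gamma}$), so we take $t := 0$. First I would remark that we actually need $\dseqr{\mathbb{N}:\Prop}$, which is the typing side-condition packaged into that lemma, so nothing extra is required.

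For the inductive step, suppose $T \equiv T_1 \imp T_2$ with $T_1, T_2$ simple types on $\mathbb{N}$, and assume $\wfr{\Gamma}$. By the induction hypothesis applied to $T_2$ (using $\wfr{\Gamma}$), there is a term $t_2$ with $\dseqr{t_2:T_2:\Prop}$. I would also need $\dseqr{T_1:\Prop}$: this follows by a secondary induction on the structure of $T_1$ exactly as in the base case / inductive case here (or one simply notes that $T_1$ being a simple type on $\mathbb{N}$, it is of the form $\forall\vec{x}^{\vec{A}}.\mathbb{N}$-like product and so has type $\Prop$, derivable whenever $\wfr{\Gamma}$; this is really the same argument, so I would factor out a sublemma ``every simple type on $\mathbb{N}$ has type $\Prop$ in any well-formed environment'', proved by the same induction, and invoke it). Then by $(\text{env}_2)$ we get $\wfr{\Gamma, y:T_1}$, and by weakening (a standard property of \CCr noted in the text) $\dseqr[\Gamma,y:T_1]{t_2:T_2:\kappa}$. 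Applying the condensed (abs+prod) rule yields $\dseqr{\lambda y^{T_1}.t_2 : \forall y^{T_1}.T_2 : \Prop}$, i.e. $\dseqr{\lambda y^{T_1}.t_2 : T_1 \imp T_2 : \Prop}$ since $y$ does not occur free in $T_2$. So we take $t := \lambda y^{T_1}.t_2$.

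I do not expect any real obstacle here — this is a routine structural induction building a constant function. The only mildly delicate point is making sure the typing side-conditions ($\dseqr{T_1:\Prop}$ and the $\kappa$ in the premise of (abs+prod)) are available; this is handled cleanly by the auxiliary observation that simple types on $\mathbb{N}$ always have sort $\Prop$, which is itself proved by the same induction, and by the weakening property of \CCr. Everything else is a direct appeal to the inference rules and the previously stated lemmas.
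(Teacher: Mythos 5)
Your proof is correct and follows essentially the same route as the paper's: structural induction on $T$, with $0$ for the base case and a constant function $\lambda y^{T_1}.t_2$ built via (env\textsubscript{2}), weakening, and (abs+prod) for the arrow case. The auxiliary sublemma you contemplate for $\dseqr{T_1:\Prop}$ is unnecessary, since the statement being proved already includes the $T:\Prop$ component, so the induction hypothesis applied to $T_1$ yields $\dseqr{T_1:\Prop}$ directly --- which is exactly what the paper does.
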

\begin{proof}
By induction on $T$ (as a simple type on $\mathbb{N}$):
\begin{itemize}
\item If $T$ is $\mathbb{N}$, then $0$ fits.
\item If $T$ is $A \imp B$ where $A$ and $B$ are simple types on $\mathbb{N}$, then by induction hypothesis on $A$, we get $\dseqr{A:\Prop}$ and by (env\textsubscript{2}) rule we obtain $\wfr{\Gamma,x:A}$. By induction hypothesis on $B$, we get $\dseqr{b:B:\Prop}$, and weakening it we have $\dseqr[\Gamma,x:A]{b:B:\Prop}$, and finally, by (abs) and (prod) rules, $\dseqr{\lambda x^A.b:A \imp B:\Prop}$.
\end{itemize}
\qed
\end{proof}

\bigskip

\CCr does not allow us to derive the usual cartesian product defined by $A \times B := \forall C^{\Prop}.(A \imp B \imp C) \imp C$. To simulate recursor from iterator, we define a restricted cartesian product $\mathbb{N} \times T$ for each $T$, simple type on $\mathbb{N}$, by encoding a natural into $T$.

\begin{lemma}
If $\wfr{\Gamma}$ holds and $T$ is a simple type on $\mathbb{N}$ then there exists two terms $enc_T$ and $dec_T$ such that $\dseqr{enc_T:\mathbb{N} \imp T}$ and $\dseqr{dec_T:T \imp \mathbb{N}}$ and for every term $n$ we have $dec_T(enc_T\ n) \breds n$.
\end{lemma}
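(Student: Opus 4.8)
The plan is to proceed by induction on the structure of the simple type $T$ on $\mathbb{N}$ (Definition~\ref{def_simple_types}). For the base case $T \equiv \mathbb{N}$, take $enc_{\mathbb{N}} := dec_{\mathbb{N}} := \lambda x^{\mathbb{N}}.x$. Since $\mathbb{N}$ is inhabited (the earlier lemma gives $\dseqr{0:\mathbb{N}:\Prop}$), it is a well-formed type in the well-formed $\Gamma$, so the restricted product rule applies and $\dseqr{\lambda x^{\mathbb{N}}.x:\mathbb{N}\imp\mathbb{N}}$ is derivable; and $dec_{\mathbb{N}}\,(enc_{\mathbb{N}}\ n) \breds n$ for any term $n$ by two $\beta$-steps.

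For the inductive step $T \equiv A \imp B$ with $A,B$ simple types on $\mathbb{N}$, the idea is to encode a natural as a \emph{constant function}: using the encoder $enc_B$ and decoder $dec_B$ furnished by the induction hypothesis, set $enc_{A\imp B} := \lambda m^{\mathbb{N}}.\lambda a^A.\,(enc_B\ m)$. To decode one must feed such a function some argument of type $A$; here Lemma~\ref{simple_types_inhab} applied to the simple type $A$ supplies a term $a_0$ with $\dseqr{a_0:A}$, so one sets $dec_{A\imp B} := \lambda f^{A\imp B}.\,(dec_B\,(f\ a_0))$. The typing claims $\dseqr{enc_{A\imp B}:\mathbb{N}\imp(A\imp B)}$ and $\dseqr{dec_{A\imp B}:(A\imp B)\imp\mathbb{N}}$ then follow by weakening the induction-hypothesis judgments (and $\dseqr{a_0:A}$) into the relevant extended context and applying (abs+prod); the only products involved, $A\imp B$, $\mathbb{N}\imp(A\imp B)$ and $(A\imp B)\imp\mathbb{N}$, are inhabited precisely by the $\lambda$-abstractions being constructed, so \CCr's restriction is respected. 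Finally, for any term $n$,
$$
dec_{A\imp B}\,(enc_{A\imp B}\ n) \;\breds\; dec_B\,\big((\lambda a^A.\,enc_B\ n)\ a_0\big) \;\breds\; dec_B\,(enc_B\ n) \;\breds\; n,
$$
the last reduction being the induction hypothesis for $B$.

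There is no serious obstacle beyond this bookkeeping; the single point deserving attention is that every product type occurring in the derivations must be exhibited as inhabited before it can be formed in \CCr, which is automatic because each encoder and decoder is presented as an explicit $\lambda$-term, and because Lemma~\ref{simple_types_inhab} provides the witness $a_0$ of the domain $A$ that the decoder needs in order to ``consume'' its functional argument. This reliance on Lemma~\ref{simple_types_inhab} is exactly why the construction is confined to simple types on $\mathbb{N}$ and would break for an arbitrary, possibly empty, type.
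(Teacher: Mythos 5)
Your proposal is correct and follows exactly the paper's own argument: induction on the simple type, identity functions in the base case, and in the step case the constant-function encoder $\lambda x^{\mathbb{N}}.\lambda z^A.enc_B\ x$ together with a decoder that applies its argument to an inhabitant of $A$ supplied by Lemma~\ref{simple_types_inhab}. The additional remarks on why the \CCr product restriction is satisfied are accurate bookkeeping that the paper leaves implicit.
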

\begin{proof}
By induction on $T$ (as a simple type on $\mathbb{N}$):
\begin{itemize}
\item If $T$ is $\mathbb{N}$, then we take the identity on $\mathbb{N}$ for $enc_T$ and $dec_T$.
\item If $T$ is $A \imp B$, we take
$$
\begin{array}{r@{\ :=\ }l}
enc_{A \imp B} & \lambda x^{\mathbb{N}}.\lambda z^A.enc_B\ x \\
dec_{A \imp B} & \lambda f^{A \imp B}.dec_B\ (f\ a)
\end{array}
$$
where $a$ is a term of type $A$ obtained from lemma~\ref{simple_types_inhab}.
\end{itemize}
\qed
\end{proof}

\begin{definition}
We define the following abbreviations for couples
$$
\begin{array}{r@{\ :=\ }l}
\mathbb{N} \times T & (T \imp T \imp T) \imp T \\
\cpl{n}{t}^T & \lambda f^{T \imp T \imp T}. f\ (enc_T\ n)\ t \\
\pi_1(c) & dec_T\ (c\ (\lambda x^T.\lambda y^T.x)) \\
\pi_2(c) & c\ (\lambda x^T.\lambda y^T.y)
\end{array}
$$
\end{definition}

\begin{lemma}
The following rules are derivable:
\dlinetwo{
	\drule{
		\wfr{\Gamma}
	}{
		\dseqr{\mathbb{N} \times T:\Prop}
	}
}{
	\drule{
		\dseqr{n:\mathbb{N}} \quad \dseqr{t:T}
	}{
		\dseqr{\cpl{n}{t}^T:\mathbb{N} \times T}
	}
}
\dlinetwo{
	\drule{
		\dseqr{c:\mathbb{N} \times T}
	}{
		\dseqr{\pi_1(c):\mathbb{N}}
	}
}{
	\drule{
		\dseqr{c:\mathbb{N} \times T}
	}{
		\dseqr{\pi_2(c):T}
	}
}
\end{lemma}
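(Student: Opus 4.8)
The plan is to establish each of the four rules by a direct typing derivation in \CCr, using only (var), (app), (abs), (prod\textsubscript{r}), weakening and the preceding lemmas. The one feature of \CCr that must be watched is that (prod\textsubscript{r}) does not permit forming a product $\forall x^A.B$ without simultaneously exhibiting an inhabitant; hence at every point where a fresh product over $T$ is introduced, a witness has to be supplied. The crucial input is lemma~\ref{simple_types_inhab}: since $T$ is a simple type on $\mathbb{N}$, it yields both $\dseqr{T:\Prop}$ and a closed term $a$ with $\dseqr{a:T}$. The two preceding lemmas also give $\dseqr{enc_T:\mathbb{N}\imp T}$, $\dseqr{dec_T:T\imp\mathbb{N}}$ and $\dseqr{0:\mathbb{N}}$, which is all the ammunition needed.

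First, for the rule concluding $\dseqr{\mathbb{N}\times T:\Prop}$ from $\wfr{\Gamma}$, I would build $\mathbb{N}\times T = (T\imp T\imp T)\imp T$ from the inside out. Weakening $\dseqr{T:\Prop}$ and using (var), two applications of (prod\textsubscript{r}) in the environment $\Gamma,x:T,y:T$ give $\dseqr{T\imp T\imp T:\Prop}$, with $\lambda x^T.\lambda y^T.x$ as the accompanying inhabitant; then weakening $\dseqr{a:T}$ into $\Gamma,f:T\imp T\imp T$ and one further use of (prod\textsubscript{r}) gives $\dseqr{(T\imp T\imp T)\imp T:\Prop}$, with witness $\lambda f^{T\imp T\imp T}.a$.

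For the pairing rule, given $\dseqr{n:\mathbb{N}}$ and $\dseqr{t:T}$, I would work in the environment $\Gamma,f:T\imp T\imp T$: (var) gives $f:T\imp T\imp T$, weakening $\dseqr{enc_T:\mathbb{N}\imp T}$ together with the hypothesis on $n$ gives $enc_T\ n:T$ by (app), and two further (app) steps (the last using the weakened hypothesis on $t$) give $f\ (enc_T\ n)\ t:T$; an (abs) step then produces $\dseqr{\cpl{n}{t}^T:(T\imp T\imp T)\imp T}$. For the projections, $\lambda x^T.\lambda y^T.x$ and $\lambda x^T.\lambda y^T.y$ both have type $T\imp T\imp T$ (two (abs) steps over variables of type $T$, with $T:\Prop$), so from $\dseqr{c:(T\imp T\imp T)\imp T}$ one (app) step produces a term of type $T$; composing with the weakened $\dseqr{dec_T:T\imp\mathbb{N}}$ via one more (app) gives $\dseqr{\pi_1(c):\mathbb{N}}$, while the uncomposed term already gives $\dseqr{\pi_2(c):T}$.

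I do not expect a genuine obstacle here: the argument is a routine assembly of derivation rules. The only delicate point — and the reason the statement is restricted to $T$ a simple type on $\mathbb{N}$ rather than to an arbitrary type of sort $\Prop$ — is the \CCr constraint that every product must be introduced together with an inhabitant, which is exactly what forces the appeal to lemma~\ref{simple_types_inhab} each time a product over $T$ is formed.
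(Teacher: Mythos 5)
The paper omits the proof of this lemma entirely, treating it as a routine derivation, and your proposal supplies exactly the expected argument: a direct bottom-up typing derivation in which each product over $T$ is formed via (prod\textsubscript{r}) together with a witness obtained from lemma~\ref{simple_types_inhab} (and $\dseqr{enc_T:\mathbb{N}\imp T}$, $\dseqr{dec_T:T\imp\mathbb{N}}$ from the preceding lemma). Your derivation is correct, and you correctly identify the only non-standard point, namely that the \CCr restriction of (prod) is what forces $T$ to be a simple type on $\mathbb{N}$ so that inhabitants are available at each product formation.
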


\begin{lemma}
The following reductions hold:
$$
\begin{array}{r@{\ \breds\ }l}
\pi_1(\cpl{n}{t}^T) & n \\
\pi_2(\cpl{n}{t}^T) & t
\end{array}
$$
\end{lemma}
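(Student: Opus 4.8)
The statement to prove is a pair of reductions for the projections on the restricted cartesian product:
$$\pi_1(\cpl{n}{t}^T) \breds n \qquad \pi_2(\cpl{n}{t}^T) \breds t.$$

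Here is my plan.

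\medskip

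\textbf{Approach.} The proof is a direct computation: unfold the definitions of $\pi_1$, $\pi_2$ and $\cpl{n}{t}^T$, perform the obvious beta-reductions, and then invoke the decoding lemma $dec_T(enc_T\ n) \breds n$ for the first projection. There is no induction and no case analysis; the whole argument is a short reduction sequence in each of the two cases.

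\medskip

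\textbf{Key steps for $\pi_2$.} First I would handle $\pi_2$, which is the easier of the two since it does not involve the encoding. By definition $\pi_2(c) = c\ (\lambda x^T.\lambda y^T.y)$, so
$$\pi_2(\cpl{n}{t}^T) = (\lambda f^{T \imp T \imp T}. f\ (enc_T\ n)\ t)\ (\lambda x^T.\lambda y^T.y).$$
One beta-step on the outer redex substitutes $\lambda x^T.\lambda y^T.y$ for $f$, giving $(\lambda x^T.\lambda y^T.y)\ (enc_T\ n)\ t$; two further beta-steps (first discarding the $enc_T\ n$ argument, then returning the second argument) reduce this to $t$. Hence $\pi_2(\cpl{n}{t}^T) \breds t$.

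\medskip

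\textbf{Key steps for $\pi_1$.} For $\pi_1$ I would proceed analogously: $\pi_1(c) = dec_T\ (c\ (\lambda x^T.\lambda y^T.x))$, so
$$\pi_1(\cpl{n}{t}^T) = dec_T\ \bigl((\lambda f^{T \imp T \imp T}. f\ (enc_T\ n)\ t)\ (\lambda x^T.\lambda y^T.x)\bigr).$$
The inner redex reduces in three beta-steps — substituting $\lambda x^T.\lambda y^T.x$ for $f$, then applying to $enc_T\ n$ and to $t$ and selecting the first argument — to $enc_T\ n$. So the whole term reduces to $dec_T\ (enc_T\ n)$, which by the previously established lemma satisfies $dec_T(enc_T\ n) \breds n$. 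Composing the two reduction sequences (and using that $\breds$ is transitive and closed under the applicative context $dec_T\,[\cdot]$) yields $\pi_1(\cpl{n}{t}^T) \breds n$.

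\medskip

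\textbf{Main obstacle.} Frankly there is no real obstacle: both reductions are routine and mechanical once the definitions are unfolded. The only point that requires care is that the first projection does not reduce to $n$ by beta alone — it genuinely needs the decoding lemma $dec_T(enc_T\ n) \breds n$ — so the statement is only about $\breds$ (reduction to), not about reaching an identical normal form in one step; and one should note that this lemma, and hence the reduction for $\pi_1$, holds for \emph{every} term $n$, not only for numerals, exactly as stated in the decoding lemma. I would simply present the two reduction chains explicitly and close with the appeal to that lemma.
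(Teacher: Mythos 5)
Your proof is correct: the paper states this lemma without proof, and the intended argument is exactly the routine unfolding-and-beta-reduction you give, with the appeal to $dec_T\,(enc_T\ n) \breds n$ needed only for $\pi_1$. Both reduction chains check out as written.
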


\begin{definition}
We define recursor from iterator by
$$
rec_T(n,b,(x^{\mathbb{N}},y^T)step) := \pi_2 \left[ it_{T \times T}(n,\cpl{0}{b}^T,(z^{T\times T})step') \right]
$$
where
$$
step' := \cpl{S(\pi_1(z)) }{ step\dsub{x,y}{\pi_1(z),\pi_2(z)}}^{T \times T}
$$
\end{definition}

\begin{lemma}
The following rule is derivable:
\dlineone{
	\drule{
		\dseqr{T:\Prop} \quad  \dseqr{n:\mathbb{N}}  \quad  \dseqr{b:T}  \quad  \dseqr[\Gamma,x:\mathbb{N},y:T]{step:T}
	}{
		\dseqr{rec_T(n,b,(x^{\mathbb{N}},y^T)step):T}
	}
}
\end{lemma}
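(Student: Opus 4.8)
The plan is to obtain the result purely by chaining the derived typing rules already established in this section --- for $0$, $S$, the couples $\cpl{\cdot}{\cdot}^T$, the projections $\pi_1,\pi_2$, and the iterator $it$ --- following the shape of the definition of $rec_T$. Unfolding that definition, $rec_T(n,b,(x^{\mathbb{N}},y^T)step)$ is $\pi_2$ applied to an iteration over $n$ whose base case is the couple $\cpl{0}{b}^T$ and whose step function is $step'$, the whole iteration being carried out at the couple type $\mathbb{N}\times T$, which I abbreviate by $C$. So I would first record the routine facts: from the hypothesis $\dseqr{T:\Prop}$ and the lemma on couples, $\dseqr{C:\Prop}$ holds; hence by (env\textsubscript{2}) the environment $\Gamma,z:C$ is well-formed; and from the rule for $0$ (applied with the ambient $\wfr{\Gamma}$), the hypothesis $\dseqr{b:T}$, and couple introduction, $\dseqr{\cpl{0}{b}^T:C}$ --- this is the base case of the iteration.

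The only step needing real care is typing the step function, i.e.\ deriving $\dseqr[\Gamma,z:C]{step':C}$. Since $step'$ is a couple, by couple introduction in the environment $\Gamma,z:C$ it suffices to derive $\dseqr[\Gamma,z:C]{S(\pi_1(z)):\mathbb{N}}$ and $\dseqr[\Gamma,z:C]{step\dsub{x,y}{\pi_1(z),\pi_2(z)}:T}$. The first is immediate: $\dseqr[\Gamma,z:C]{z:C}$ by (var), then the projection rule for $\pi_1$, then the rule for $S$. For the second I would start from the hypothesis $\dseqr[\Gamma,x:\mathbb{N},y:T]{step:T}$, weaken it (weakening is available in \CCr) to $\dseqr[\Gamma,z:C,x:\mathbb{N},y:T]{step:T}$, and then apply the substitution property~\ref{prop_subst} twice: once substituting $\pi_1(z)$ (of type $\mathbb{N}$) for $x$, once substituting $\pi_2(z)$ (of type $T$) for $y$, both being typeable in $\Gamma,z:C$ by the projection rules. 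Since de Bruijn indices are used there is no capture to worry about; one only has to observe that the two successive single substitutions compose to the announced simultaneous substitution $\dsub{x,y}{\pi_1(z),\pi_2(z)}$, which holds because $\pi_2(z)$ does not mention $x$ and $\pi_1(z)$ does not mention $y$.

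With $\dseqr{C:\Prop}$, $\dseqr{n:\mathbb{N}}$, $\dseqr{\cpl{0}{b}^T:C}$ and $\dseqr[\Gamma,z:C]{step':C}$ in hand, the derived iterator rule instantiated at the type $C$ gives $\dseqr{it_{C}(n,\cpl{0}{b}^T,(z^{C})step'):C}$, and one final application of the projection rule for $\pi_2$ yields the goal $\dseqr{rec_T(n,b,(x^{\mathbb{N}},y^T)step):T}$. The only obstacle, such as it is, is the bookkeeping in the step-function case --- lining up the weakening, the two substitutions, and the dependent environment $\Gamma,z:C$ with the intended simultaneous substitution; every other step is a direct instantiation of a rule derived earlier in the section.
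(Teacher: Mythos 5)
Your proof is correct. The paper states this lemma without any proof, so there is nothing to compare against; your derivation --- typing the base couple $\cpl{0}{b}^T$, typing $step'$ in $\Gamma,z:\mathbb{N}\times T$ via the projection rules plus weakening and two applications of Proposition~\ref{prop_subst} (which do compose to the simultaneous substitution, as you note), then instantiating the iterator rule at $\mathbb{N}\times T$ and projecting with $\pi_2$ --- is exactly the routine chaining of the previously derived rules that the paper leaves implicit. Your reading of the definition with $C=\mathbb{N}\times T$ is the right one: the subscripts ``$T\times T$'' in the paper's definition of $rec_T$ and $step'$ are evidently slips for $\mathbb{N}\times T$ (respectively for the superscript $T$ of the couple), since only $\mathbb{N}\times T$ has been defined.
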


\begin{lemma}
The following reductions hold:
$$
\begin{array}{r@{\ }l}
rec_T(0,b,(x^{\mathbb{N}},y^T)step)    & \breds b \\
rec_T(S(n),b,(x^{\mathbb{N}},y^T)step) & \breds step\dsub{x,y}{n,rec_T(n,b,(x^{\mathbb{N}},y^T)step)}
\end{array}
$$
\end{lemma}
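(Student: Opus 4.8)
The plan is to unfold the definition of $rec_T$ and compute using the three previously-established reduction facts: the iterator reductions $it_T(0,b,(y^T)step) \breds b$ and $it_T(S(n),b,(y^T)step) \breds step\dsub{y}{it_T(n,b,(y^T)step)}$, and the projection reductions $\pi_1(\cpl{n}{t}^T) \breds n$ and $\pi_2(\cpl{n}{t}^T) \breds t$. First I would handle the base case: $rec_T(0,b,(x^{\mathbb{N}},y^T)step)$ is by definition $\pi_2\left[ it_{T\times T}(0,\cpl{0}{b}^T,(z^{T\times T})step') \right]$; the inner iterator reduces to $\cpl{0}{b}^T$ by the base iterator reduction (instantiated at type $T\times T$), so the whole term reduces to $\pi_2(\cpl{0}{b}^T) \breds b$. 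This is a couple of reduction steps and needs no induction.

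For the step case I would argue by induction on $n$, proving the auxiliary invariant that
$$
it_{T\times T}(n,\cpl{0}{b}^T,(z^{T\times T})step') \breds \cpl{n}{rec_T(n,b,(x^{\mathbb{N}},y^T)step)}^T .
$$
The base case of this invariant is the computation just done (reading it backwards: $\pi_1$ of it gives $0$ and $\pi_2$ gives $b$). For the inductive step, apply the step iterator reduction to get $it_{T\times T}(S(n),\ldots) \breds step'\dsub{z}{it_{T\times T}(n,\ldots)}$; then substitute the induction hypothesis for the inner iterator; then use that $step' \equiv \cpl{S(\pi_1(z))}{step\dsub{x,y}{\pi_1(z),\pi_2(z)}}^{T\times T}$, so after substituting $z := \cpl{n}{rec_T(n,\ldots)}^T$ the two projections reduce (by the $\pi_1$/$\pi_2$ reductions) to $n$ and $rec_T(n,\ldots)$ respectively, yielding $\cpl{S(n)}{step\dsub{x,y}{n,rec_T(n,\ldots)}}^T$, which is exactly the invariant at $S(n)$. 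Taking $\pi_2$ of both sides of the invariant at $S(n)$ then gives $rec_T(S(n),b,(x^{\mathbb{N}},y^T)step) \breds step\dsub{x,y}{n,rec_T(n,b,(x^{\mathbb{N}},y^T)step)}$ as required.

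The main obstacle is bookkeeping rather than anything conceptual: one must be careful that all the cited reductions are used at the correct instantiating type (here $T\times T$, not $T$), that $x,y \notin \VL(z)$ so the substitution inside $step'$ commutes properly with the substitution of $z$, and that $\breds$ is a congruence so reductions may be performed under $\pi_2[\cdot]$ and inside substitutions — all of which are routine. I would present the base case as a short displayed chain of $\breds$-steps and the step case as the short induction on $n$ using the invariant above, omitting the fully expanded intermediate terms.
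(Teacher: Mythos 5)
The paper actually states this lemma without proof, so there is nothing to compare against; the question is only whether your argument is sound. Your architecture --- unfold $rec_T$, use the previously established iterator and projection reductions, and induct on $n$ with an auxiliary invariant about the inner iterator --- is the right one, and your base case is correct. But the auxiliary invariant you propose is false as a $\breds$-statement. Writing $I_n$ for $it_{T\times T}(n,\cpl{0}{b}^T,(z^{T\times T})step')$, you claim $I_n \breds \cpl{n}{rec_T(n,b,(x^{\mathbb{N}},y^T)step)}^T$; already at $n=0$ this fails, since $I_0 \breds \cpl{0}{b}^T$ and $\cpl{0}{b}^T$ is a proper \emph{reduct} of $\cpl{0}{rec_T(0,\ldots)}^T$ (because $rec_T(0,\ldots)\breds b$ and not conversely), so your claimed reduction points the wrong way. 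The same orientation problem recurs at the inductive step: your computation produces $\cpl{S(n)}{step\dsub{x,y}{n,rec_T(n,\ldots)}}^T$, which is again a reduct of $\cpl{S(n)}{rec_T(S(n),\ldots)}^T$ rather than something the latter is a reduct of. ``Reading it backwards'' is precisely what $\breds$ does not allow; the invariant holds only up to $\beq$, which is too weak to yield the $\breds$ claimed in the lemma.

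The repair is small. Note that $\pi_2(I_n)$ is \emph{syntactically} $rec_T(n,b,(x^{\mathbb{N}},y^T)step)$, so no reduction is needed in the second component at all; the only invariant you need is $\pi_1(I_n)\breds n$, proved by induction on $n$ exactly by the computations you already describe (base: $\pi_1(I_0)\breds\pi_1(\cpl{0}{b}^T)\breds 0$; step: $\pi_1(I_{S(n)})\breds \pi_1\bigl(\cpl{S(\pi_1(I_n))}{\cdots}^T\bigr)\breds S(\pi_1(I_n))\breds S(n)$). Then
$$
rec_T(S(n),\ldots)\;\equiv\;\pi_2(I_{S(n)})\;\breds\;\pi_2\bigl(\cpl{S(\pi_1(I_n))}{\,step\dsub{x,y}{\pi_1(I_n),\pi_2(I_n)}}^T\bigr)\;\breds\;step\dsub{x,y}{\pi_1(I_n),\,rec_T(n,\ldots)}\;\breds\;step\dsub{x,y}{n,\,rec_T(n,\ldots)},
$$
as required. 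The remaining points of care you list (instantiating the iterator and pairing lemmas at the product type, freshness of $z$ with respect to $x,y$, and congruence of $\breds$ under contexts and substitution) are all genuine but routine, and your base case for $rec_T(0,\ldots)\breds b$ stands as written.
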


\section{Conclusions and direction for further work}

We have seen a simple attempt to pedagogize the calculus of constructions. It has a good computational power ---at least G\"odel system T--- but lacks of logical expressivity ---does not even natively contain simply typed $\lambda$-calculus. A pleasant aspect is the simplicity of the added constraint, which also emphasizes that the (prod) rule is responsible for vacuity in \CC.

\bigskip

Logical limitations of our calculus \CCr suggest a more precise definition for a calculus of constructions to be pedagogical: in a pedagogical calculus, we should be able to prove the symmetry of the Leibniz equality, because the non-emptiness of $x =_A y$ can be justified by substituting $\mathbb{N}$ to $A$ and $0$ to $x$ and $y$. It means that we not only need that a well-formed environment guarantees the non-emptiness of its types by exhibiting an example, but the converse should hold too.

\medskip

But as it was already pointed out in section~\ref{sec_naive_extension}, the direct converse statement of the Poincar\'e criterion is not suitable. We then propose the following definition of a pedagogical subsystem of \CC (whose judgments are indexed by p):
\begin{definition}[(pedagogical subsystem of \CC)]\hspace*{1em}\\
$P$ is a pedagogical subsystem of \CC if:
\begin{enumerate}
\item
$\wfp{x_1:A_1, \ldots, x_n:A_n}$ holds {\bf if and only if}
	\begin{enumerate}
	 \item $\wf{x_1:A_1, \ldots, x_n:A_n}$ holds in \CC,
	 \item and there exist terms $t_1, \ldots, t_n$ such that
	 $$
	 \begin{array}{c}
	 \dseqp[]{t_1:A_1:\kappa_1} \\
	 \dseqp[]{t_2:A_2\dsub{x_1}{t_1}:\kappa_2} \\
	 \vdots \\
	 \dseqp[]{t_n:A_n\dsub{x_1,\ldots,x_{n-1}}{t_1,\ldots,t_{n-1}}:\kappa_n}
	 \end{array}
	 $$
	 \end{enumerate}
\item the system is stable by reduction, namely if $\dseqp{u:B}$ and $u \bred u'$, then $\dseqp{u':B}$.
\end{enumerate}
\end{definition}

\begin{remark}
\begin{enumerate}
 \item The left to right side of the equivalence is already known as ``the Poincar\'e criterion'', and enforces $P$ to be a subsystem of \CC. The right to left side should then be named ``the converse of the Poincar\'e criterion''.
 \item The subject reduction must be explicitly stated here since\cite{LCDM08} defined a ``simple pedagogical second-order propositional calculus (P\textsubscript{s}-Prop\textsubscript{2})'' verifying 1 but not 2.
\end{enumerate}
\end{remark}

\medskip

One can show, keeping only the rules of \CC necessary to define second order $\lambda$-calculus and adding constraints of the pedagogical second order $\lambda$-calculus of~\cite{LCDM09}, that we obtain a calculus which is pedagogical in the new sense just defined. For instance, P-MPC et P-Prop$^2$ [see section~\ref{sec_intro}] satisfy: it exists $F$ such that $\Gamma \vdash F$ {\em if and only if} it exists $\sigma$ such that $\vdash \sigma \cdot \Gamma$.

By the same way, we can construct more expressive pedagogical restrictions of \CC: a hint is given by\cite{DM08} where he studies pedagogical propositional higher order systems. It thus raises the question of formally characterizing a maximally expressive pedagogical restriction of \CC.

\bibliographystyle{jucs}

\nocite{PG53,GG46,GG50,GG51,GG51-a,VNK00a,VNK00b,VM75,DN66,DN73,VV55,PV53}
\bibliography{bibliography}

\end{document}